\newcommand{\nn}{\mathfrak{n}}
\newcommand{\htimes}{\,\hat \otimes\,}
\newcommand{\tn}{\tilde \nn}
\newcommand{\tA}{\tilde A}
\newcommand{\te}{\tilde e}
\newcommand{\tU}{\tilde U}
\newcommand{\tH}{\tilde \HH}
\newcommand{\tSigma}{\tilde \Sigma} 
\newcommand{\tV}{\tilde V}
\newcommand{\Si}[2]{\Sigma\indices{_{#1}^{#2}}}
\newcommand{\cg}{\mathring g}
\newcommand{\ch}{\mathring h}
\newcommand{\ctau}{\mathring \tau}
\newcommand{\cee}{\mathring e} 
\newcommand{\cnn}{\mathring \nn}
\newcommand{\cA}{\mathring A}
\newcommand{\cU}{\mathring U}
\newcommand{\cHH}{\mathring \HH}
\newcommand{\cSigma}{\mathring \Sigma}
\newcommand{\XXX}{\mathfrak X}
\newcommand{\AAA}{\mathcal A}
\newcommand{\BBB}{\mathcal B}
\newcommand{\BB}{\mathbf B}
\newcommand{\tBB}{\tilde \BB}
\newcommand{\scrB}{\mathscr B}
\newcommand{\WW}{{\mathbf W}}
\newcommand{\tWW}{\tilde {\mathbf W}}
\newcommand{\tFF}{\tilde {\mathbf F}}
\newcommand{\gini}{\mathfrak g} 
\newcommand{\kini}{\mathfrak k}
\newcommand{\angbracket}[1]{\left\langle #1 \right\rangle}
\title[Stability of Kasner singularities in $(3+1)$-Dim vacuum universe]{The past stability of Kasner singularities for the $(3+1)$-dimensional Einstein vacuum spacetime under polarized $U(1)$-symmetry}
\author{Kai Dong}
\address{Department of Mathematics, National University of Singapore}
\email{kdong@u.nus.edu}
\begin{document}

\begin{abstract}
    In this paper, we give a new proof to \revise{a past stability result established in Fournodavlos-Rodnianski-Speck \cite{fournodavlos2023stable}, for Kasner solutions of the $(3+1)$-dimensional Einstein vacuum equations under polarized $U(1)$-symmetry}. \revise{Our method, inspired by Beyer-Oliynyk-Olvera-Santamar{\'\i}a-Zheng \cite{beyer2021fuchsian, beyer2025localized}, relies on a newly developed $(2+1)$ orthonormal-frame decomposition and a careful symmetrization argument, after which the Fuchsian techniques can be applied.} 
    
    We show that the perturbed solutions are \revise{asymptotically pointwise Kasner}, geodesically incomplete and crushing at the Big Bang singularity. 
    They are achieved by reducing the $(3+1)$ Einstein vacuum equations to a Fuchsian system coupled with several constraint equations, with the symmetry assumption playing an important role in the reduction. 
    Using Fuchsian theory together with finite speed of constraints propagation, we obtain global existence and precise asymptotics of the solutions up to the singularities. 
\end{abstract}

\maketitle

\section{Introduction}

The Big Bang solutions have been studied extensively in both mathematics and physics for decades. 
These spacetimes furnish standard cosmological models that describe large-scale expansion and singular behavior of the universe. Among them, the Kasner solutions form a simple, explicit class exhibiting Big Bang singularities; they therefore provide a natural testing ground for questions of stability and asymptotic behavior. 

Fix the spatial topology to be the closed manifold $\mathbb T^3$. The goal of this paper is to revisit the polarized $U(1)$-symmetric Kasner stability problem for Einstein vacuum equations treated in Fournodavlos-Rodnianski-Speck \cite{fournodavlos2023stable}. \revise{We will present an alternative approach} that emphasizes a $(2+1)$ orthonormal-frame reduction and a tailored Fuchsian formulation. 
Our analysis is informed by the recent Fuchsian techniques developed in Beyer-Oliynyk-Olvera-Santamar{\'\i}a-Zheng \cite{beyer2021fuchsian, beyer2025localized}. 
Concretely, we consider the Cauchy problem with initial data on a $t=t_0$ slice obtained by symmetric perturbations of Kasner data.  
Evolving them by the Einstein vacuum equations, we prove that the resulting spacetimes develop Big Bang singularities at the synchronous time $t=0$ while remaining asymptotically pointwise Kasner near these singularities.
\revise{We also remark an earlier work by Alexakis-Fournodavlos \cite{alexakis2025stable}, where the authors established the stability of Schwarzschild-type singularities for axi-symmetric and polarized Einstein vacuum spacetimes with spatial topology $\mathbf R^3$, via a PDE-ODE approach, energy estimates and an iteration scheme.}

\subsection{Einstein field equations and Kasner spacetimes}

On a $(D+1)$-dimensional Lorentzian manifold the Einstein field equations read 
\begin{subequations}
    \begin{gather}
        ^{(D+1)}R_{ab} - \frac 12 {^{(D+1)}R}g_{ab} = T_{ab}, \\ 
        ^{(D+1)}\nabla^a T_{ab} = 0, 
    \end{gather}
\end{subequations}
where $^{(D+1)}R_{ab}$ and $^{(D+1)}R$ denote the Ricci and scalar curvatures, and $T_{ab}$ is the stress-energy tensor. 

In this work, we treat the Einstein vacuum equations, with $$T_{ab} = 0,$$ 
as well as the Einstein-scalar field equations, where 
\begin{gather*}
    T_{ab} = 2 \nabla_a \phi \nabla_b \phi - g_{ab} \nabla^c \phi \nabla_c \phi, \\ 
    \Box_g \phi = 0, 
\end{gather*}
combined with the conservation law of scalar field $\phi$. 

In cosmology, the Kasner spacetime provides explicit solutions in both settings. The $(D+1)$-dimensional Kasner solution for the Einstein-scalar field system can be written as 
\begin{equation} \label{g_Kas} 
    g_{Kas} = - (\mathrm dt)^2 + \sum_{\Lambda=1}^D t^{2p_\Lambda} (\mathrm dx^\Lambda)^2, \quad \phi = B \log t, 
\end{equation}
where the exponents $p_\Lambda$ are called the \emph{Kasner exponents}, satisfying two algebraic relations 
\begin{gather}
    \sum_{\Lambda=1}^D p_\Lambda = 1, \quad \sum_{\Lambda=1}^D p_\Lambda^2 = 1 - B^2. \label{Kasner_Ddim}
\end{gather}
The special case $B=0$ then yields the vacuum Kasner metrics. 

In particular, for $D=3$ and $\phi \equiv 0$, the Kasner relations (\ref{Kasner_Ddim}) reduce to 
\begin{equation} \label{p123} 
    \sum_{I=1}^3 p_I = \sum_{I=1}^3 p_I^2 = 1. 
\end{equation}
A straightforward algebraic check shows that exactly one of $p_i$'s is negative, and we adopt the ordering 
\begin{equation} 
    p_1 < 0 < p_2 \, \& \, p_3.
\end{equation}

As a guiding example, near its spacelike singularity, the well-known Schwarzschild solution is locally approximated by the Kasner metric with exponents 
\begin{equation}
    (p_1,p_2,p_3) = \left(-\frac 13, \, \frac 23, \, \frac 23 \right), \label{Schwarzschild}
\end{equation}
where the positive exponents correspond to angular directions. 
This motivates, among other things, restricting attention to polarized $U(1)$-symmetry in which the symmetry direction is associated with a positive Kasner exponent. 

    The recent work by Fournodavlos-Rodnianski-Speck \cite{fournodavlos2023stable} studied the stability of Kasner metrics for Einstein vacuum (dimension $D \ge 10$) and Einstein-scalar field equations (dimension $D \ge 3$) in the full \emph{subcritical regime}: 
    \begin{equation}
        \max_{I \ne J} \{p_I + p_J - p_K \} < 1, \label{subcritcal} 
    \end{equation}
    and the $(3+1)$ Einstein-scalar field equations were also revisited in Beyer-Oliynyk-Zheng \cite{beyer2025localized} with a localized argument via Fuchsian techniques. 
    For the Einstein-Maxwell-scalar field-Vlasov system, An-He-Shen \cite{an2025stabilitybigbangsingularity} further generalized to the stability of Kasner singularities in the full \emph{strongly subcritical regime}: 
    \begin{equation} \label{StronglySubcritical} 
        \max \{p_I + p_J - p_K \} < 1. 
    \end{equation}
    However, the $(3+1)$-dimensional vacuum Kasner exponents satisfying (\ref{p123}) are \revise{\emph{neither} strongly subcritical \emph{nor} subcritical}, so the analysis here must confront this non-subcritical situation.

\subsection{The spatial topology and symmetry assumption of spacetimes} 

As introduced in \cite{isenberg2002asymptotic, fournodavlos2023stable} and also presented in \defref{Symmetry}, we assume the spacetime admits a nowhere degenerating Killing vector field $S$ with $\mathbb T^1$ orbits, that is hypersurface-orthogonal, spacelike and satisfies 
$$\mathcal L_S g = \mathcal L_S k = 0.$$
Moreover, $k(S,X) = 0$ for every $\Sigma_t$-tangent vector field $X$ with $g(S,X) = 0$, where $k$ denotes the second fundamental form of the spatial slices $\Sigma_t \simeq \mathbb T^3$. 

Under this symmetry, the metric splits as 
\begin{equation} \label{metric_gh} 
    g = h + e^{2\phi} (\mathrm d x^3)^2, \quad h = - \nn^2 (\mathrm dt)^2 + \sum_{I,J=1}^2 g_{IJ} \mathrm d x^I \mathrm d x^J 
\end{equation}
with all components independent of $x^3$. The $x^3$-direction corresponds to the positive Kasner exponent $p_3$.

\subsection{Rough version of the main result}
Below is an informal description of the main theorem; a precise statement appears in Section~\ref{Section_the_main_theorem} \thmref{Main}. 

\begin{theorem}[Rough version]
    Consider the $(3+1)$-dimensional Einstein Vacuum Equations under polarized $U(1)$-symmetry. Then every nontrivial vacuum Kasner solution (\ref{g_Kas}) with $B=0$ and $D=3$ is dynamically stable under small polarized $U(1)$-symmetric perturbations near their Big Bang singularity. 

    More specifically, for sufficiently small perturbations whose size is controlled by $\delta_0>0$, the perturbed spacetimes developing synchronized singularities at the harmonic time $t=0$, are asymptotically pointwise Kasner, and exhibit crushing behavior with $\tr k \sim t^{-1 + c\delta_0}$, and are past timelike geodesically incomplete with Kretschmann scalar blowing up like $R^{abcd} R_{abcd} \sim t^{-4 + c\delta_0}$, and are asymptotically velocity-terms dominated (AVTD) as $t \rightarrow 0^+$. 
\end{theorem}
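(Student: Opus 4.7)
The plan is to first exploit the polarized $U(1)$-symmetry to reduce the $(3+1)$-dimensional vacuum Einstein problem to an effective $(2+1)$-dimensional Einstein-scalar field system on the quotient spacetime. Under the splitting (\ref{metric_gh}), the hypersurface-orthogonal Killing condition implies that the Kaluza-Klein-like scalar $\phi$ decouples as a massless scalar field on $(\Sigma_t, h)$, and the $(3+1)$ vacuum equations for $g$ become equivalent to the $(2+1)$ Einstein-scalar field system $R_{ab}(h) = 2 \nabla_a \phi \nabla_b \phi$ with $\Box_h \phi = 0$. Near the Kasner background $\phi$ behaves like $p_3 \log t$, so $p_3 > 0$ plays the role of the scalar-field coefficient $B$ in the $(2+1)$ Kasner model (\ref{g_Kas}). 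The decisive observation is that, although the $(3+1)$ vacuum Kasner exponents violate (\ref{subcritcal}), the reduced $(2+1)$ Kasner data with scalar coefficient $B=p_3$ lie in a regime where Fuchsian methods apply; this is the mechanism by which the symmetry converts a non-subcritical vacuum problem into an effectively subcritical one.

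Next, following the framework of Beyer-Oliynyk-Olvera-Santamar\'ia-Zheng, I would fix a harmonic (areal) time gauge on $(\Sigma_t, h)$ so that $t=0$ synchronizes the Big Bang, and introduce an orthonormal frame for the reduced $(2+1)$-dimensional spacetime adapted to the constant-$t$ slices. Let $U$ collect the rescaled geometric unknowns---frame coefficients, Ricci rotation coefficients, traceless second fundamental form, lapse correction, and rescaled $\phi$ together with its derivatives---with the Kasner background asymptotics subtracted so that $U \to 0$ as $t \to 0^+$. The evolution equations then assume the Fuchsian form
\begin{equation*}
    t\, \partial_t U + t\, A^I(t, x, U)\, \partial_I U = \mathcal{N}(t, x)\, U + t^{\mu}\, F(t, x, U),
\end{equation*}
where $\mathcal{N}$ is the ``velocity-term'' matrix and $\mu>0$ measures the gap produced by the reduced subcritical structure. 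Verifying that the spectrum of $\mathcal{N}$ is nonnegative on the relevant modes is where the reduction to $(2+1)$-with-scalar does the main work: the dangerous off-diagonal ``borderline'' couplings present in the raw $(3+1)$ formulation are absorbed into Kasner-type resonances controlled by $p_3$.

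The core analytic step is the symmetrization argument that readies the system for Fuchsian theory. I would construct a time-dependent symmetrizer $\mathcal{S}(t, x, U)$, possibly after block-decomposing $U$ according to frame tensor weights, so that $\mathcal{S} A^I$ is symmetric and $\mathcal{S}\mathcal{N} + \mathcal{N}^{T}\mathcal{S}$ is positive semi-definite. The symmetric hyperbolic Fuchsian theorems in \cite{beyer2021fuchsian, beyer2025localized} then deliver existence on $(0, t_0]\times \mathbb{T}^{2}$ together with energy estimates of the form $\|U(t)\|_{H^N} \lesssim \delta_0\, t^{-c\delta_0}$ for initial data of size $\delta_0 \ll 1$. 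The reduced equations come with a set of Hamiltonian and momentum constraints; using the twice-contracted Bianchi identities these satisfy their own first-order symmetric hyperbolic subsidiary system, and a standard energy argument combined with finite-speed-of-propagation shows that if the constraints vanish on $\{t=t_0\}$ they vanish throughout the development.

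From the Fuchsian decay of $U$ the geometric consequences follow almost directly. Pointwise Kasner asymptotics are immediate; the trace of the second fundamental form inherits $\mathrm{tr}\, k = -t^{-1}\bigl(1 + O(t^{c\delta_0})\bigr)$, giving crushing; computing the Riemann tensor in the orthonormal frame yields $R^{abcd} R_{abcd} \sim t^{-4 + c\delta_0}$; past timelike geodesic incompleteness comes from integrating the lapse along causal curves; and AVTD behavior is built into the Fuchsian structure, since spatial-derivative terms carry an explicit factor of $t$. \emph{The main obstacle} I anticipate is the construction of the symmetrizer and the verification of the spectral condition on $\mathcal{N}$: the orthonormal-frame formulation entangles residual rotational frame freedom with the genuine geometric degrees of freedom, and matching the symmetrizer's block structure to the precise positivity pattern dictated by the reduced Kasner data is where the ``careful symmetrization argument'' of the abstract must do most of the real work.
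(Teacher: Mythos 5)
Your proposal captures the broad architecture correctly (reduce by the $U(1)$ symmetry, put the evolution system in Fuchsian form, propagate constraints, read off the geometric asymptotics), but several of the concrete steps are either misstated or would not go through as written.

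First, the reduced $(2+1)$-dimensional system is not the standard Einstein–scalar field system $R_{ab}(h) = 2\nabla_a\phi\nabla_b\phi$, $\Box_h\phi=0$. Working directly with the unrescaled quotient metric $h$ from (\ref{metric_gh}) (as the paper does; no Einstein-frame conformal rescaling), the vacuum equations become the \emph{modified} system (\ref{ESForiginal_1})–(\ref{ESForiginal_2}): $R_{ab}(h) = \nabla_a\nabla_b\phi + \nabla_a\phi\nabla_b\phi$ and $\Box_h\phi = -\|\nabla\phi\|_h^2$, equivalently $R_{ab} = \tau^{-1}\nabla_a\nabla_b\tau$ and $\Box_h\tau = 0$ with $\tau = e^\phi$. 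The Hessian term is crucial; it is what makes the harmonic-time gauge $t=\tau$ natural and is baked into the source terms (\ref{T00})–(\ref{TAB}). Relatedly, your claim that the reduction converts the non-subcritical $(3+1)$ vacuum Kasner into an "effectively subcritical" $(2+1)$-with-scalar Kasner with coefficient $B=p_3$ is not something the paper invokes, and it is not obviously meaningful: the subcritical condition (\ref{subcritcal}) is a statement about three spatial exponents, and in the reduced two-spatial-dimension system there is no analogous algebraic condition that plays this role. The paper simply computes the spectrum of the constant matrix $\scrB$ in (\ref{def_scrB*}) directly from the Kasner exponents (\ref{Kasner_exponents})–(\ref{ValueofrC}).

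Second, the symmetrization mechanism you propose — constructing a time- and field-dependent Friedrichs symmetrizer $\mathcal{S}(t,x,U)$ — is not what the paper needs or does, and you have identified the wrong place where symmetrization is delicate. The paper's evolution system (\ref{evolution_nn})–(\ref{evolution_Sigma}) is brought into \emph{manifestly} symmetric form by purely algebraic moves: adding twice the momentum constraint (\ref{constraint_HHandSigma}) to (\ref{evolution_U}) and changing variables to $V = U - A$, which yields \emph{constant} symmetric coefficient matrices (\ref{def_BB0})–(\ref{def_BBC*}); no symbolic symmetrizer is built. The genuinely subtle symmetrization step — and the one the paper flags in Remark \ref{KeySymmetrization} — occurs in the derivation of the Ricci equations themselves: within the $(2+1)$ orthonormal-frame decomposition the symmetry of $R_{AB}$ is not automatic and must be recovered via the Jacobi identity. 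Without this, the constraint-propagation system (\ref{XXXequation}) fails to be strongly hyperbolic, and your "standard energy argument combined with finite-speed-of-propagation" for constraint propagation has nothing to run on. Even after the fix, the constraint system is not symmetric hyperbolic but only constantly hyperbolic (admitting a symbolic symmetrizer, Lemma \ref{EigenvalueofAAA}), which is the minimum needed for the uniqueness argument.

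Finally, the "residual rotational frame freedom" you flag as the main obstacle is eliminated from the outset by imposing Fermi–Walker transport of the spatial frame along $e_0$, cf.\ (\ref{FWtrans}); it never enters the symmetrizer question. The actual technical weight of the proof lies in (i) the Jacobi-identity symmetrization underlying constraint propagation, and (ii) verifying the five Fuchsian conditions of Theorem \ref{FuchsianTheorem} for the rescaled system (\ref{FuchsianSystemtWW}) — in particular the block structure $\PP\tFF = (\PP^\perp\tWW)*(\PP\tWW)$, $\PP^\perp\tFF = (\PP\tWW)*(\PP\tWW)$ for the nonlinearity, which depends on the precise choice of rescalings (\ref{change_tn})–(\ref{change_tSigma}).
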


\subsection{Prior and Related results}

The mathematical study of cosmological singularities and Big Bang asymptotics has a long history, with contributions from both mathematical relativity and theoretical physics. At a heuristic level, the BKL picture \cite{belinskii1970oscillatory} initiated this line of inquiry; the BKL heuristics propose that near a generic spacelike singularity, the dynamics are approximately local and governed, at leading order, by ODEs in time. This physical viewpoint motivated a series of rigorous investigations for different models. 

On the mathematical side, important early progress concerned spacetimes with Gowdy symmetry, see e.g. \cite{chrusciel1990strong,isenberg1990asymptotic}; influential results on quiescent singularities and AVTD behavior were obtained in \cite{andersson2001quiescent}. There are also active literature on nonlinear stability and singularity formation for specific cosmological models; representative contributions include a sequence of works by Rodnianski-Speck \cite{rodnianski2009stability, rodnianski2013nonlinear, rodnianski2014stablebigbangformation, rodnianski2018regime, rodnianski2009stability}, major advances by Klainerman-Szeftel \cite{klainerman2017global}, several comprehensive frameworks developed by Ringstr\"{o}m \cite{ringstrom2009strong, ringstrom2010cosmic, ringstrom2013topology, ringstrom2022initial, ringstrom2025initial}. 
\revise{See also related analyses \cite{taylor2017global, lindblad2020global, bigorgne2021asymptotic, fajman2021stabilizing, fajman2021stability, fajman2024stability, fajman2025cosmic, fajman2025stability, urban2024quiescentbigbangformation}. 
Regarding the study of spacelike singularities for the Einstein-scalar field system with spherical symmetry, we refer to \cite{an2020polynomial, an2023curvature, an2025quantitative, li2025kasner}.} 

The Fuchsian method in PDE theory, originally developed in a semi-linear, analytic setting, has classical roots, see e.g. \cite{bove1985cauchy, kichenassamy1993blow1, kichenassamy1993blow2, kichenassamy1996fuchsian, kichenassamy2007fuchsian}. The geometric adaptation of these ideas to Einstein equations were initiated by Rendall and St{\aa}hl in \cite{rendall2000fuchsian, rendall2004asymptotics, staahl2002fuchsian}. Building on these foundations, the analytical assumptions extended to quasi-linear and Sobolev-regular settings, see works by Ames-Beyer-Isenberg-LeFloch \cite{ames2012quasi, ames2013quasilinear, beyer2010second, beyer2011second, beyer2017self}.

On the physics side, the orthonormal-frame decomposition in $(3+1)$ cosmology goes back to Ellis \cite{ellis2012relativistic}, and further discussions in \cite{van1997general, rohr2005conformal}. 
This decomposition is convenient for writing Einstein field equations in the first-order form and for separating time-dominant from spatial terms near singularities. 
The study of Einstein equations with spacelike $U(1)$ isometry groups has its origins in the physics literature as well, see e.g. \cite{moncrief1986reduction, isenberg2002asymptotic}.

More directly related to the present paper are recent \revise{important works} proving stability of various cosmological backgrounds, notably: 
\begin{itemize}
    \item \cite{fournodavlos2023stable}: stability of Kasner-like backgrounds in the full subcritical regime (\ref{subcritcal}) and for polarized $U(1)$-symmetry; the polarized $U(1)$-symmetry is the setting we revisit here; \cite{an2025stabilitybigbangsingularity}: generalization of \cite{fournodavlos2023stable} to the Einstein-Maxwell-scalar field-Vlasov system in the full strongly subcritical regime (\ref{StronglySubcritical}); 
    \item Major works by Beyer-Oliynyk-Olvera-Santamar{\'\i}a \cite{beyer2021fuchsian, oliynyk2021fuchsian}, and then a sequence of recent works apply the Fuchsian techniques to past stability problems, including \cite{beyer2020relativistic, beyer2021fuchsian, beyer2023past, fajman2021stabilizing, fajman2024stability, oliynyk2021fuchsian, Todd2024, beyer2025localized, fournodavlos2025future}, and future stability problems, including  \cite{oliynyk2021future, liu2024future, fournodavlos2025future}. 
\end{itemize}

\subsection{Proof strategy and New ingredients}

Our method is inspired mainly by \cite{beyer2021fuchsian, beyer2025localized}, and further combines a newly developed $(2+1)$ orthonormal-frame reduction and a careful symmetrization procedure. 
The proof strategy proceeds through the following steps: 

\begin{itemize}

    \item[1.] {\em Reduction and frame decomposition.} In Sections \ref{Section_Formulation} and \ref{Section_Fuchsian_system}, we reduce the $x^3$ symmetry direction, and work on the quotient $(M^{3+1}/\mathbb T^1,h)$ with $h$ as in (\ref{metric_gh}). 
    
    Choosing a Fermi-Walker transported orthonormal frame $\{e_0, e_1, e_2\}$ with $e_0 = \nn^{-1} \pa_t$, as a generalization of \cite{ellis2012relativistic, van1997general}, we employ \revise{the newly developed (irrotational) relativistic $(2+1)$ orthonormal frame decomposition}: 
    \begin{subequations}
        \begin{align}
            [e_0, e_A] &= U_A e_0 - \left( \HH \delta_A^B + \Si AB \right) e_B, \label{intro_e0eA} \\ 
            [e_A, e_B] &= 2 A_{[A} \delta_{B]}^C e_C, \label{intro_eAeB} 
    \end{align}
    \end{subequations}
    which reduces the unknowns to a vector $\WW := \left( \nn, e_P^\Lambda, U_P, A_P, \HH, \Sigma_{PQ} \right)^T$. 
    \revise{In our setting,} the Einstein field equations give an evolution system for $\WW$ of the form: 
    \begin{align} \label{intro_WWequation}
        \BB^0 \pa_t \WW + \nn e_C^\Lambda \BB^C \pa_\Lambda \WW = \GG (t, \WW), 
    \end{align}
    (see also (\ref{evolution_nn})-(\ref{evolution_Sigma}) or (\ref{HyperbolicSystemWW})), combined with constraint equations 
    \begin{equation} \label{intro_XXXdefinition}
        \XXX = \left( \EEE_{AB}^\Lambda, \UUU_{AB}, \NNN_{A}, \MMM_{A}, \HHH \right)^T = 0, 
    \end{equation}
    (see (\ref{constraint_e})-(\ref{constraint_A}) or (\ref{def_constraintEEE})-(\ref{def_constraintHHH}) for the explicit formulas of $\XXX$). \revise{We remark that the above decomposition (\ref{intro_e0eA}), (\ref{intro_eAeB}) and the full associated computations in Section \ref{Section_Formulation} are adapted to both general polarized $U(1)$-symmetric spacetimes and $(2+1)$ Lorentzian manifolds as well}; 
    
    \item[2.] {\em Propagation of constraints.} In Section \ref{Section_Propagation_of_Constraints}, we show that if the constraints (\ref{intro_XXXdefinition}) hold initially, then they are preserved for as long as the solution $\WW$ exists. 
    
    \revise{A key observation is that, within the $(2+1)$ decomposition, the symmetry of the Ricci tensor $^{(2+1)}R_{AB}$ is not immediate and must be recovered using carefully the Jacobi identity. After performing the necessary symmetrization, 
    the constraints satisfy a linear strongly hyperbolic system of the form: 
    \begin{align} \label{intro_XXXequation}
        \AAA^0 \pa_t \XXX + \nn \AAA^R e_R^\Lambda \pa_\Lambda \XXX = \BBB \XXX 
    \end{align}
    (see also (\ref{propagation_E})-(\ref{propagation_H}) or (\ref{XXXequation})) and the finite speed of propagation then implies $\XXX \equiv 0$. On the other hand, without the symmetrization step, the propagation system (\ref{XXXequation}) would \emph{fail} to be strongly hyperbolic or Friedrichs-symmetrizable, creating significant difficulties and obstacles, see {\emph {Remark}~\ref{KeySymmetrization}} for details;} 
    
    \item[3.] {\em Symmetrization and Fuchsian method.} In Sections \ref{Section_Fuchsian_system} and \ref{Section_Fuchsian_Theorem}, we perform algebraic symmetrizations and time-weighted rescalings to transform the evolution system (\ref{intro_WWequation}) into a \emph{Fuchsian form}: 
    \begin{equation} 
        \tBB^0 \pa_t \tWW + \tBB^\Lambda \pa_\Lambda \tWW = \frac 1t \scrB \PP \tWW + \frac 1t \tFF[t, \tWW], 
    \end{equation}
    where $\tWW$ is the rescaled unknown, $\PP$ is a projection used to identify the \emph{decaying} and \emph{converging} components of $\tWW$, and other coefficient matrices and nonlinear source terms satisfy the \emph{Fuchsian conditions} (see items (1) to (5) in the end of statement of \thmref{FuchsianTheorem} or the proof to \propref{Prop_Fuchsian} for the precise Fuchsian conditions). This reduces the study of global existence and asymptotics for singular initial value problems to verification of several algebraic and analytic conditions. 
    Following prior works \cite{beyer2025localized, beyer2021fuchsian}, the simplified Fuchsian method that is more tailored to our setting is presented in Appendix \ref{Appendix_Fuchsian}; 
    
    \item[4.] {\em Proof of the main theorem.} Finally, in Section \ref{Section_the_main_theorem}, the Fuchsian existence and decay yield precise asymptotic expansions for the components of the rescaled unknown $\tWW$: 
    \begin{equation}
        \begin{gathered}
            \tn = t^{2\tH_*+\eps_0} \tn_* \left( 1 + O_{H^{s-1}}(t^\zeta) \right), \\ 
            \te_I^\Lambda = O_{H^{s-1}}(t^\zeta), \quad \tA_P = O_{H^{s-1}}(t^\zeta), \quad \tU_P = O_{H^{s-1}}(t^\zeta), \\ 
            \tH = \tH_* + O_{H^{s-1}}(t^\zeta), \quad \tSigma_{AB} = \tSigma_{*AB} + O_{H^{s-1}}(t^\zeta), 
        \end{gathered}
    \end{equation}
    where the quantities with index $*$ are independent of $t$. From these profiles, we derive the asymptotics of the second fundamental form and curvature invariants associated with the physical metric $g$, and deduce AVTD behavior from the leading order ODE dynamics; see \thmref{Main}. 
\end{itemize}

\subsection{Acknowledgments}

The author would like to thank Prof.\ Xinliang An and Prof.\ Todd A. Oliynyk for many illuminating discussions. The author is also grateful to Dr.\ Haoyang Chen for his careful reading and valuable comments on the manuscript. 
\smallskip


\section{Formulation of the equations} \label{Section_Formulation}

In this section, we present some preliminaries and the formulations, including the main reduction to the $(2+1)$ orthonormal-frame hyperbolic system. 

Consider the $(3+1)$-decomposition of the sliced spacetime $(M^{3+1}, g)$ with zero shift. We write its metric as 
\begin{equation} \label{metric0}
    g = -\nn^2 \dd t^2 + g_{IJ} \dd x^I \mathrm d x^J,
\end{equation}
where the function $\nn>0$ is called the \emph{lapse} function. 

Let $^{(3+1)}\nabla$ denote the Levi-Civita connection of $g$, and let $^{(3+1)}R$ denote the associated curvature tensors. Then the $(3+1)$-dimensional Einstein vacuum equation reads 
\begin{equation}\label{EVEoriginal}
    {^{(3+1)}R}_{ij}(g) = 0. 
\end{equation} 

We employ the following polarized $U(1)$-symmetry assumption and the wave map reduction of the physical metric $g$ throughout this paper.

\begin{definition} \label{Symmetry} 
    The spacetime $(M^{3+1}, g)$ is called polarized $U(1)$-symmetric with respect to the $x^3$-direction, if the vector field $\pa_3$ generates a spatial hypersurface-orthogonal Killing symmetry, and the metric is independent of $x^3$. Then, with a slight abuse of notation, the metric can be written in the form: 
    \begin{equation} \label{metric}
    g = - \nn^2 \dd t^2 + \sum_{I,J = 1}^2 g_{IJ} \dd x^I \dd x^J + e^{2\phi} (\mathrm d x^3)^2 =: h + e^{2\phi} (\mathrm d x^3)^2, 
    \end{equation}
    where, $\phi$ is the wave map from $M^{3+1}$ to $M^{3+1}/\mathbb T^1$, $h$ denotes a $(2+1)$-dimensional Lorentzian metric on the quotient manifold $M^{3+1}/\mathbb T^1$ and all components of $h$, including $\phi$, are functions on $M^{3+1}/\mathbb T^1$, independent of $x^3$. 
\end{definition}

The following proposition is the core reduction in our analysis. 

\begin{proposition}
    Let $(M^{3+1}, g)$ be a Lorentzian manifold with polarized $U(1)$-symmetry. 
    Then the $(3+1)$-dimensional Einstein vacuum equation for $g$: 
    \begin{equation}
        {^{(3+1)}R}_{ab} (g) = 0, 
    \end{equation}
    is equivalent to the $(2+1)$-dimensional modified Einstein-scalar-field equations for $h$ and $\phi$: 
    \begin{subequations}
        \begin{align}
            {^{(2+1)}R}_{ab}(h) &= \, ^{(2+1)}\nabla_a \, ^{(2+1)}\nabla_b \phi + \, ^{(2+1)}\nabla_a \phi \, ^{(2+1)}\nabla_b \phi, \label{ESForiginal_1} \\ 
            \Box_h \phi &= - \norm{^{(2+1)}\nabla \phi}_h^2, \label{ESForiginal_2}
        \end{align}
    \end{subequations}
    where $^{(2+1)}\nabla$ and $^{(2+1)}R$ denote the Levi-Civita connection and curvature tensors of $h$ respectively. 
    Moreover, in what follows, unless otherwise specified, we drop the index $(2+1)$ on all geometric quantities of $M^{3+1}/\mathbb T^1$, for instance, $^{(2+1)}R_{abcd}$ will simply be denoted by $R_{abcd}$. 
\end{proposition}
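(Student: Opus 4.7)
The plan is to perform a direct dimensional reduction in the adapted coordinates $(x^0, x^1, x^2, x^3)$ from \defref{Symmetry}, computing the Christoffel symbols and Ricci curvature of $g$ explicitly and separating the resulting components according to whether the indices lie in the quotient directions $\{0,1,2\}$ or along the symmetry direction $3$. Because the metric (\ref{metric}) is block-diagonal and all coefficients are independent of $x^3$, the nonvanishing Christoffel symbols of $g$ are easy to enumerate:
\begin{equation*}
    \Ga{J}{I}{K}(g) = \ga{J}{I}{K}(h), \qquad \Ga{3}{I}{3}(g) = -e^{2\phi}\, h^{IL} \partial_L \phi, \qquad \Ga{I}{3}{3}(g) = \partial_I \phi,
\end{equation*}
for $I,J,K \in \{0,1,2\}$; all other components vanish either by the block structure or by $x^3$-independence.

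With this list in hand I would expand the Ricci tensor of $g$ using the standard identity
\begin{equation*}
    {^{(3+1)}R}_{\alpha\beta} = \partial_\mu \Ga{\alpha}{\mu}{\beta}(g) - \partial_\alpha \Ga{\mu}{\mu}{\beta}(g) + \Ga{\mu}{\mu}{\lambda}(g)\, \Ga{\alpha}{\lambda}{\beta}(g) - \Ga{\alpha}{\mu}{\lambda}(g)\, \Ga{\mu}{\lambda}{\beta}(g),
\end{equation*}
splitting the contracted index into its tangent and $3$-parts. This produces three blocks of identities. For the tangent--tangent components, the tangent part of each sum reproduces ${^{(2+1)}R}_{IJ}(h)$ exactly, while the remaining contributions from the $3$-index Christoffels combine with the terms involving $\Ga{I}{3}{3}(g) = \partial_I \phi$ to form the covariant Hessian together with a quadratic gradient, yielding
\begin{equation*}
    {^{(3+1)}R}_{IJ}(g) = {^{(2+1)}R}_{IJ}(h) - \, {^{(2+1)}\nabla}_I\, {^{(2+1)}\nabla}_J \phi \;-\; {^{(2+1)}\nabla}_I \phi\, {^{(2+1)}\nabla}_J \phi.
\end{equation*}
For the mixed components ${^{(3+1)}R}_{I3}$, every term in the expansion vanishes by direct inspection: the relevant Christoffel factors are either identically zero by the list above, or become zero upon taking $\partial_3$-derivatives; hence ${^{(3+1)}R}_{I3}(g) \equiv 0$ is automatic. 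For the symmetry-direction component, a direct calculation gives
\begin{equation*}
    {^{(3+1)}R}_{33}(g) = -e^{2\phi}\bigl(\Box_h \phi + \norm{{^{(2+1)}\nabla} \phi}_h^2\bigr).
\end{equation*}

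Setting ${^{(3+1)}R}_{ab}(g) = 0$ and comparing componentwise with the three identities above yields the claimed equivalence: the tangent--tangent equation is precisely (\ref{ESForiginal_1}); the $33$-equation becomes (\ref{ESForiginal_2}) after dividing by $-e^{2\phi}$; and the mixed equation holds trivially. The main obstacle is purely organizational --- in the tangent--tangent calculation one must carefully recombine the various $\partial_I \phi$ terms together with the $\ga{K}{L}{J}(h)$ factors into the covariant Hessian ${^{(2+1)}\nabla}_I\, {^{(2+1)}\nabla}_J \phi$ of $h$. Beyond this bookkeeping, the reduction is the standard Kaluza-Klein computation, and no deeper conceptual difficulty arises.
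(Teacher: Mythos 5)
Your proposal is correct and follows essentially the same route as the paper: both perform the standard Kaluza--Klein reduction by listing the nonvanishing Christoffel symbols of the block-diagonal metric $g = h + e^{2\phi}(\mathrm dx^3)^2$, expanding the Ricci tensor of $g$, and separating components into tangent--tangent, mixed, and $33$-blocks to read off (\ref{ESForiginal_1}), the trivial mixed identity, and (\ref{ESForiginal_2}). The Christoffel and curvature identities you derive match the paper's verbatim, so there is no substantive difference in method.
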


\begin{proof}
    The proof is a direct computation. We have the following relations between the Christoffel symbols $^{(3+1)}\Gamma$ and $\Gamma$ associated with $g$ and $h$ respectively: 
    \begin{equation}
        \begin{gathered}
            ^{(3+1)}\Gamma^i_{jk} = \Gamma^i_{jk}, \quad ^{(3+1)}\Gamma^3_{33} = 0, \quad ^{(3+1)}\Gamma^3_{jk} = 0, \quad ^{(3+1)}\Gamma^i_{3k} = 0, \\ 
            ^{(3+1)}\Gamma^3_{j3} = \nabla_j \phi, \quad ^{(3+1)}\Gamma^i_{33} = - e^{2\phi} \nabla^i \phi; 
        \end{gathered}
    \end{equation}
    where the indices $i,j,k$ (and all subsequent indices) take values in $\{0,1,2\}$. Hence the Riemannian and Ricci tensor components $^{(3+1)}R$ of $g$ in the orthonormal frame are given by 
    \begin{equation}
        \begin{gathered}
            ^{(3+1)}R_{ijkl} = R_{ijkl}, \quad 
            ^{(3+1)}R_{ijk3} \equiv 0, \quad 
            ^{(3+1)}R_{3jk3} = \nabla^2_{j,k} \phi + \nabla_j \nabla_k \phi, \\ 
            ^{(3+1)}R_{ik} = R_{ik} - \nabla^2_{i,k} \phi - \nabla_i \phi \nabla_k \phi, \quad 
            ^{(3+1)}R_{3k} \equiv 0, \quad 
            ^{(3+1)}R_{33} = -e^{2\phi} \, \left( \Box_{h} \phi + |\nabla \phi|_{h}^2 \right). 
        \end{gathered}
    \end{equation}
    The Einstein equations then yield (\ref{ESForiginal_1}) and (\ref{ESForiginal_2}). 
\end{proof}

From now on, unless otherwise specified, we will perform all subsequent calculations on the quotient spacetime $(M^{3+1}/\mathbb T^1, h)$. 

\subsection{The relativistic $(2+1)$-dimensional orthonormal frame decomposition}

Let $\{e_I = e_I^\Lambda \pa_\Lambda \}_{I=1,2}$ be a spatial orthonormal frame for $h$ on the initial hypersurface $\Sigma_{t_0} := \{t = t_0\}$, and let $e_0 = \nn^{-1} \pa_t$ be the unit timelike vector field defined for all $t$ and normal to $\Sigma_{t}$. We propagate the spatial vectors along $e_0$ using Fermi-Walker transport: 
\begin{definition}
    A vector field $X^i$ is called Fermi-Walker transported along a timelike curve $\gamma(s)$ if its Fermi-Walker derivative vanishes, i.e. 
    \begin{equation}
        \frac {\mathrm D_F X^i}{\dd s} := Z^j \nabla_j X^i + 2 A^{[i} Z^{j]} X_j = 0, 
    \end{equation}
    where $Z^i=(\pa_s)^i$, $A^i=Z^j\nabla_j Z^i$. 
\end{definition}

By transporting the frame $\{e_1, e_2\}$ along the $t$-curves with tangent vector $Z^i = e_0 \delta^{0i}$ and imposing Fermi-Walker transport, we have that 
$$0 = \nabla_{e_0} e_I + h(e_0,e_I) \nabla_{e_0} e_0 - h(e_I, \nabla_{e_0} e_0) e_0.$$
Since $h(e_0,e_I) = 0$, it follows that 
\begin{equation} \label{FWtrans}
    \nabla_{e_0} e_I = h(\nabla_{e_0} e_0, e_I) e_0. 
\end{equation}
Hence we obtain a Fermi-Walker transported orthonormal frame $\{e_0, e_1, e_2\}$ on the spacetime. In this frame, the metric components satisfy 
    $$h_{ij} = h(e_i, e_j) = \eta_{ij},$$ 
where $\eta$ denotes the standard Minkowski metric. 
The principal advantage of Fermi-Walker transport is to eliminate arbitrary spins or rotations of the frame along the flow of $e_0$, which simplifies the subsequent decompositions; 
see \cite{ellis2012relativistic, van1997general} for $(3+1)$ treatments for tetrads with self-rotation.

With the spin freedom eliminated, the relativistic $(2+1)$ orthonormal-frame decomposition proceeds as follows: 
\begin{subequations}
    \begin{align}
        [e_0, e_A] &= U_A e_0 - \left( \HH \delta_A^B + \Si AB \right) e_B, \label{e0eA} \\ 
        [e_A, e_B] &= 2 A_{[A} \delta_{B]}^C e_C, \label{eAeB} 
    \end{align}
\end{subequations}
where $\HH$ is a function, $A$ and $U$ are $1$-tensors and $\Sigma$ is a symmetric, trace-free $2$-tensor. 

Let $\gamma$ denote the connection coefficients 
$$\gamma_{acb} := g(\nabla_{e_a} e_b, e_c),$$ 
then from (\ref{e0eA}) and (\ref{eAeB}), we compute explicitly: 
\begin{subequations}
    \begin{gather}
        \gamma_{00A} = - \gamma_{0A0} = -U_A, \\ 
        \gamma_{A0B} = - \gamma_{AB0} = - \left( \HH \delta_{AB} + \Sigma_{AB} \right), \\ 
        \gamma_{ABC} = 2A_{[B} \delta_{C]A}, \\ 
        \gamma_{000} = 0, \quad \gamma_{A00} = 0, \quad \gamma_{0AB} = 0. 
    \end{gather}
\end{subequations}

Since the curvature components are given by 
\begin{equation}
    R_{abcd} = - e_a (\gamma_{bdc}) + e_b (\gamma_{adc}) + \eta^{ij} \left( \gamma_{ajc} \gamma_{bdi} - \gamma_{bjc} \gamma_{adi} + \gamma_{ajb} \gamma_{idc} - \gamma_{bja} \gamma_{idc} \right), 
\end{equation}
a direct computation yields: 

\begin{lemma} \label{Ricci}
    The $(2+1)$-dimensional metric $h$ has the following nontrivial Riemannian and Ricci curvature components: 
    \begin{subequations}
        \begin{align}
            R_{0B0D} &= -e_0(\HH) \delta_{BD} - e_0(\Sigma_{BD}) + e_B(U_D) + U_B U_D + A_D U_B - A \cdot U \delta_{BD} \\ 
            &\quad - \HH^2 \delta_{BD} - 2 \HH \Sigma_{BD} - \Si BC \Sigma_{CD} \nonumber, \\ 
            R_{0BCD} &= e_0(A_C) \delta_{BD} - e_0(A_D) \delta_{BC} + \HH U_C \delta_{BD} + U_C \Sigma_{BD} - \HH U_D \delta_{BC} - U_D \Sigma_{BC} \\ 
            &\quad - \HH A_D \delta_{BC} + \HH A_C \delta_{BD} - \Sigma_{BC} A_D + \Sigma_{BD} A_C, \nonumber \\ 
            R_{ABCD} &= - e_A(A_D) \delta_{CB} + e_A(A_C) \delta_{DB} + e_B(A_D) \delta_{CA} - e_B(A_C) \delta_{DA} \\ 
            &\quad + (\HH^2 - A \cdot A) (\delta_{AC} \delta_{BD} - \delta_{BC} \delta_{AD}) \nonumber \\ 
            &\quad + \HH (\delta_{AC} \Sigma_{BD} + \delta_{BD} \Sigma_{AC} - \delta_{BC} \Sigma_{AD} - \delta_{AD} \Sigma_{BC}) + \Sigma_{AC} \Sigma_{BD} - \Sigma_{BC} \Sigma_{AD} \nonumber \\ 
            R_{00} &= -2 e_0(H) + e_A(U^A) + (U-A)\cdot U - 2H^2 - \Sigma \cdot \Sigma, \\ 
            R_{0A} &= e_0(A_A) + \HH(U_A + A_A) - (A^B + U^B) \Sigma_{AB}, \\ 
            R_{A0} &= - e_A(H) + e_B \left(\Si AB \right) - 2A^B \Sigma_{AB}, \\ 
            R_{AB} &= - e_A(U_B) + e_0(\HH \delta_{AB} + \Sigma_{AB}) + (U-A)\cdot A \ \delta_{AB} \\ 
            &\quad - U_A (U_B + A_B) + e_C (A^C) \delta_{AB} + 2H(H \delta_{AB} + \Sigma_{AB}). \nonumber 
        \end{align}
    \end{subequations}
\end{lemma}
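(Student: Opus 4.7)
The plan is a direct computation, starting from the stated general formula
\[
R_{abcd} = - e_a(\gamma_{bdc}) + e_b(\gamma_{adc}) + \eta^{ij}\bigl( \gamma_{ajc}\gamma_{bdi} - \gamma_{bjc}\gamma_{adi} + \gamma_{ajb}\gamma_{idc} - \gamma_{bja}\gamma_{idc} \bigr),
\]
and substituting the explicit connection coefficients
\[
\gamma_{00A}=-U_A,\quad \gamma_{A0B}=-(\HH\delta_{AB}+\Sigma_{AB}),\quad \gamma_{ABC}=2A_{[B}\delta_{C]A},
\]
with all other components (those given in the preceding display) vanishing. The index $j$ ranges in $\{0,1,2\}$, and $\eta^{00}=-1$, $\eta^{AB}=\delta^{AB}$, so each quadratic term splits into a temporal piece and a spatial piece; careful sign bookkeeping is what most of the work consists of.

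First I would compute $R_{0B0D}$. The derivative terms $-e_0(\gamma_{B0D}) + e_B(\gamma_{0D0})$ immediately produce $e_0(\HH)\delta_{BD}+e_0(\Sigma_{BD})$ (with the correct sign) and $-e_B(U_D)$ after substitution; combining with the quadratic sum over $j$ yields the stated expression, where the $A\cdot U$ and $U_BU_D$ pieces come from products of $\gamma_{B0D}$-type and $\gamma_{A00}$-type factors, and the $\HH^2$, $2\HH\Sigma$, $\Sigma^2$ pieces come from $\gamma_{A0B}\gamma_{C0D}$-type products. For $R_{0BCD}$ and $R_{ABCD}$ I would proceed identically: identify the two surviving derivative terms, then enumerate the quadratic contributions by separating $j=0$ versus $j=P\in\{1,2\}$. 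The formula for $R_{ABCD}$ is the longest but purely mechanical; it naturally organizes into (i) frame derivatives of $A$, (ii) a scalar Gauss-type curvature piece $(\HH^2-A\cdot A)(\delta_{AC}\delta_{BD}-\delta_{BC}\delta_{AD})$, and (iii) shear contributions involving $\Sigma$.

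Once the Riemann components are in hand, the Ricci components follow by contraction $R_{bd}=\eta^{ac}R_{abcd} = -R_{0b0d}+\delta^{AC}R_{AbCd}$. For $R_{00}$, the first piece gives $-2e_0(\HH)+e_A(U^A)+\ldots$ directly from the $R_{0A0A}$ sum; the trace-free condition on $\Sigma$ removes the cross terms with $\HH$ and leaves $-2\HH^2-\Sigma\cdot\Sigma$. For $R_{0A}$ and $R_{A0}$ I would contract $R_{0BAB}$ and $R_{B0BA}$ respectively, noting that these are \emph{not} equal a priori in the orthonormal frame (asymmetry of the Ricci form in a non-coordinate frame is precisely what the author later exploits in the symmetrization step flagged in Remark~\ref{KeySymmetrization}). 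The contractions reproduce the stated expressions after using the symmetry and trace-freeness of $\Sigma_{AB}$. Finally $R_{AB}$ comes from $-R_{0A0B}+\delta^{CD}R_{CACB}$, where the Gauss-type piece contributes $-(\HH^2-A\cdot A)\delta_{AB}$ per the contracted identity, combining with $-R_{0A0B}$ to give the listed combination.

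The main obstacle, as always for such lemmas, is not conceptual but bookkeeping: signs from $\eta^{00}=-1$, correct placement of indices in the antisymmetrized $\gamma_{ABC}=A_B\delta_{CA}-A_C\delta_{BA}$, and tracking which of the quadratic products genuinely vanish because the corresponding $\gamma$-component is zero. I would organize the computation into a table of non-vanishing $\gamma$'s and, for each component of $R_{abcd}$, list the handful of index assignments of $j$ that give nonzero contributions before summing. No analytic input is required; the result is an algebraic identity that follows by substitution, and correctness can be cross-checked against the Jacobi-type identities satisfied by the commutators \eqref{e0eA}--\eqref{eAeB}, which provide a consistency check on the antisymmetric parts.
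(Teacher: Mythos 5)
Your proposal is correct and is essentially the paper's approach: the paper simply asserts "a direct computation yields" before the lemma, and your sketch is exactly that computation — substitute the nonzero $\gamma$'s into the stated frame formula for $R_{abcd}$, split each $\eta^{ij}$ sum into the $j=0$ and $j=P$ pieces, then contract for the Ricci components. One small clarification worth making: since the connection is Levi-Civita, the Ricci tensor \emph{is} symmetric even in this non-coordinate frame; what is not a priori evident is that the two frame-expressions $R_{0A}$ and $R_{A0}$ (and likewise the $A\leftrightarrow B$ symmetry of $R_{AB}$) obtained from distinct contractions coincide, which encodes the Jacobi identity and is precisely what Remark~\ref{KeySymmetrization} exploits.
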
 

Let $\tau = e^\phi$, then the $(2+1)$-dimensional modified Einstein-scalar-field system (\ref{ESForiginal_1})-(\ref{ESForiginal_2}) is equivalent to 
\begin{subequations}
    \begin{align}
        R_{ab}(g) &= T_{ab}(\tau) = \tau^{-1} \nabla_a \nabla_b \tau, \label{ESF01} \\ 
        \Box_g \tau &= 0. \label{ESF02} 
    \end{align}
\end{subequations}

\subsection{The choice of the background metric} 

As mentioned in the introduction, there are explicit solutions of (\ref{EVEoriginal}); for instance, the Kasner family $\cg$: 
\begin{equation} \label{Kasner}
    \cg = -(\mathrm dt)^2 + \sum_{\Lambda=1}^3 t^{2p_\Lambda} (\mathrm dx^\Lambda)^2, 
\end{equation}
where the constants $p_I$, called the \emph{Kasner exponents}, satisfy the \emph{Kasner relations}: 
\begin{equation} \label{Kasner_3dim}
    \sum_{\Lambda=1}^3 p_\Lambda = \sum_{\Lambda=1}^3 p_\Lambda^2 = 1. 
\end{equation}

We choose $\cg$ as our background spacetime. Then a simple algebraic check shows that there is exactly one of the exponents $p_\Lambda$'s negative; without loss of generality, we assume  
$$ p_1 < 0 < p_2 \, \& \, p_3. $$

\begin{remark}
    Recalling \defref{Symmetry}, it is necessary to take $p_3 > 0$. This choice ensures the compatibility with the Schwarzschild solution (see (\ref{Schwarzschild})), in which case the $x^3$ symmetry direction corresponds to an angular symmetry. 
    Moreover, it is also guaranteed that the singularity still occurs at the hypersurface $\Sigma_0$, and, as will be immediately seen below, that the following change of coordinates is well-defined. 
\end{remark}

Considering a new time function $t := t^{p_3}$, the background metric can be re-expressed as 
\begin{equation}
    \ch = - t^{r_0} \dd t^2 + \sum_{\Lambda = 1}^2 t^{r_\Lambda} (\mathrm d x^\Lambda)^2, \quad \ctau = t, 
\end{equation}
where the new Kasner exponents $r_\Lambda$'s satisfy 
\begin{equation}
    \begin{gathered} \label{Kasner_exponents}
        r_0 = \frac 2P - 4 \ge 0, \quad r_\Lambda = \frac {2p_\Lambda}{p_3}, \quad P = \frac {p_3}{1+p_3}, \\ 
        r_1 + r_2 = r_0, \quad r_1^2 + r_2^2 = \frac 2P r_0. 
    \end{gathered}
\end{equation}
Then it follows that 
    \begin{equation} \label{ValueofrC} 
        r_1 < 0 < r_2, \quad r_0 - r_1 > r_0 - r_2 > -2. 
    \end{equation}

Moreover, the background geometric quantities appearing in our decomposition (\ref{e0eA})-(\ref{eAeB}) are 
\begin{subequations}
    \begin{gather}
        \cnn = t^{r_0/2}, \quad \cee_A^\Lambda = t^{-r_A/2} \delta^\Lambda_A, \quad \cA_A = \cU_A = 0, \label{background_neAU}  \\ 
        \cHH = \frac {r_0}{4} t^{-r_0/2 - 1}, \quad \cSigma_{AB} = \left( \frac{r_{AB}}{2} - \frac {r_0}{4} \delta_{AB} \right) t^{-r_0/2 - 1},  \label{background_HSigma} 
    \end{gather}
\end{subequations}
where $r_{AB} = r_A$ if $A=B$ and $r_{AB} = 0$ otherwise. 

We remark that all subsequent calculations and the entire argument remain valid if working in the original time function, where many additional terms involving the parameter $p_3-1$ would then appear.

\subsection{The gauge freedom}

After imposing (\ref{e0eA})-(\ref{eAeB}), only one gauge freedom remains in the lapse function $\nn$. Aligned with \cite{beyer2025localized, isenberg2002asymptotic}, we fix this freedom by choosing the \emph{harmonic time} of spacetimes, $t=\tau$. Then (\ref{ESF01})-(\ref{ESF02}) become 
\begin{subequations}
    \begin{align}
        R_{ab}(g) &= T_{ab}(g) = t^{-1} \nabla_a \nabla_b t, \label{ESF1} \\ 
        \Box_g t &= 0. \label{ESF2} 
    \end{align}
\end{subequations}

Taking (\ref{e0eA}) into account, the components of $T_{ab}$ are 
\begin{subequations}
    \begin{gather}
        T_{00} = -2 t^{-1} \nn^{-1} \HH, \label{T00} \\ 
        T_{A0} = T_{0A} = - t^{-1} \nn^{-1} U_A, \label{TA0} \\ 
        T_{AB} = - t^{-1} \nn^{-1} \left( \HH \delta_{AB} + \Sigma_{AB} \right). \label{TAB} 
    \end{gather}
\end{subequations}

\begin{remark}
    In \cite{fournodavlos2023stable, an2025stabilitybigbangsingularity}, the authors fix the gauge freedom by using a constant mean curvature (CMC) foliation, 
    \begin{equation*}
        \tr k = - \frac 1t, 
    \end{equation*}
    where $k$ is the second fundamental form of the $t$-hypersurfaces $\Sigma_t$.
    This then yields an elliptic equation for $\nn$, which is convenient for energy estimates therein. 

    However, in our harmonic time gauge and the choice of time coordinates, we will instead recover similar properties for our solutions; see assertions (2) and (3) in our main theorem \thmref{Main} for the concrete expressions of $\tr k$ near singularities. 
\end{remark}

\subsection{The derivation of equations}

\begin{subequations}
    \begin{itemize}
        \item[a.] From \lemref{Ricci}, (\ref{ESF1}) and (\ref{T00})-(\ref{TAB}), we obtain 
        \begin{align}
            2 e_0(H) - e_A(U^A) &= 2 t^{-1} \nn^{-1} \HH + (U-A)\cdot U - 2H^2 - \Sigma \cdot \Sigma, \label{Hamiltonian_Constraint} \\ 
            e_0(A_A) &= - t^{-1} \nn^{-1} U_A - \HH(U_A + A_A) + (A^B + U^B) \Sigma_{AB}, \label{Momentum_Constraint1} \\ 
            e_A(\HH) - e_B \left(\Si AB \right) &= t^{-1} \nn^{-1} U_A - 2A^B \Sigma_{AB}, \label{Momentum_Constraint2} \\ 
            2 e_0(\HH) - e_A(U^A) + 2e_A (A^A) &= - 2\nn^{-1} t^{-1} \HH - 4H^2 + U \cdot U + 2 A \cdot A - U \cdot A, \label{Trace_RAB} \\ 
            e_0(\Sigma) - e \htimes U &= U \htimes (U+A) - 2H \Sigma - t^{-1} \nn^{-1} \Sigma, \label{Traceless_RAB} 
        \end{align}
        where (\ref{Trace_RAB}) and (\ref{Traceless_RAB}) correspond to the trace and the \emph{symmetric} traceless part of $R_{AB}$, respectively. The Hamiltonian and Momentum constraints are contained in (\ref{Hamiltonian_Constraint})-(\ref{Momentum_Constraint2}). Here and thereafter, for $1$-forms $A_I$ and $B_J$, we define their \emph{symmetric} traceless tensor product by 
        $$(A \htimes B)_{IJ} := \frac 12 (A_I B_J + A_J B_I) - \frac {\delta_{IJ}}{\tr \delta} \, \tr(A \otimes B); $$
        \item[b.] The relativistic $(2+1)$-dimensional orthonormal frame decompositions (\ref{e0eA})-(\ref{eAeB}) give immediately that 
        \begin{align}
            e_{[A} \left( e_{B]}^\Lambda \right) &= A_{[A} e_{B]}^\Lambda, \\ 
            e_0 \left( e_A^\Lambda \right) &= - \left( \HH \delta_A^B + \Si AB \right) e_B^\Lambda, \\ 
            e_A(\nn) &= \nn U_A; 
        \end{align}
        \item[c.] The Jacobi identity 
        $$\sum_{cyc} \  [[e_A, e_B], e_0] = 0,$$ 
        should also be reconcilable with (\ref{e0eA})-(\ref{eAeB}), which implies 
        \begin{align}
            e_{[A} (U_{B]}) &= A_{[A} U_{B]}; 
        \end{align}
        \item[d.] Finally, the gauge condition (\ref{ESF2}) gives the evolution equations for $\nn$ and $U_A$: 
        \begin{align}
            e_0 (\nn) &= 2 \nn \HH, \\ 
            e_0 (U_A) &= 2 e_A(\HH) + \HH U_A - U^B \Sigma_{AB}. 
        \end{align}
\end{itemize}
\end{subequations}

\begin{remark}[Key symmetrization] \label{KeySymmetrization} 
    The symmetrization step in (a) is essential and implicitly utilizes the Jacobi identity. Without symmetrization we will obtain the linear hyperbolic system for the propagation of constraints, that is, however, neither strongly hyperbolic nor Friedrichs-symmetrizable (see Section \ref{Section_Propagation_of_Constraints}), which introduces substantial difficulties. 
\end{remark}

\smallskip 

We now re-organize and summarize those equations as follows: 

\smallskip 

{\bf Constraint equations: }
\begin{subequations}
    \begin{align}
        e_{[A} \left( e_{B]}^\Lambda \right) &= A_{[A} e_{B]}^\Lambda, \label{constraint_e} \\ 
        e_{[A} (U_{B]}) &= A_{[A} U_{B]}, \label{constraint_U} \\ 
        e_A(\nn) &= \nn U_A, \label{constraint_nn} \\ 
        e_A(\HH) - e_B \left(\Si AB \right) &= - 2A^B \Sigma_{AB} + t^{-1} \nn^{-1} U_A, \label{constraint_HHandSigma} \\ 
        e_A(A^A) &= A \cdot A - H^2 + \frac 12 \Sigma \cdot \Sigma - 2 \nn^{-1} t^{-1} \HH; \label{constraint_A}
    \end{align}
\end{subequations}

\smallskip 

{\bf Evolution equations:} 
\begin{subequations}
    \begin{align}
        e_0(\nn) &= 2 \nn \HH, \label{evolution_nn} \\ 
        e_0 \left( e_A^\Lambda \right) &= - \left( \HH \delta_A^B + \Si AB \right) e_B^\Lambda, \label{evolution_e} \\ 
        e_0(A_A) + e_A(\HH) - e_B \left( \Si AB \right) &= - \HH (U_A + A_A) - (A^B - U^B) \Sigma_{AB}, \label{evolution_A} \\ 
        e_0 (U_A) - 2 e_A(\HH) &= \HH U_A - U^B \Sigma_{AB}, \label{evolution_U} \\ 
        2e_0(\HH) - e_A(U^A) + 2 e_A(A^A) &= - 4H^2 + U \cdot U + 2 A \cdot A - U \cdot A - 2 \nn^{-1} t^{-1} \HH,  \label{evolution_HH} \\ 
        e_0(\Sigma) - e \htimes U &= U \htimes (U+A) - 2H\Sigma - \nn^{-1} t^{-1} \Sigma. \label{evolution_Sigma} 
    \end{align}
\end{subequations}

\subsection{Symmetrization of evolution system}

Since the hyperbolic system (\ref{evolution_nn})-(\ref{evolution_Sigma}) is not symmetric right now, we modify it to obtain a symmetric quasi-linear hyperbolic system. 

The strategy is to use the Hamiltonian constraint (\ref{constraint_A}) and the Momentum constraints (\ref{constraint_HHandSigma}) as follows: 
\begin{itemize}
    \item[a.] multiply (\ref{constraint_HHandSigma}) by $\alpha$ and then add to (\ref{evolution_U}); 
    \item[b.] multiply (\ref{constraint_A}) by $\beta_{AB}$ and then add to (\ref{evolution_Sigma}), where $\beta_{AB}$ is some symmetric traceless tensor; 
    \item[c.] define new variables $B$ and $V$ to be the linear combinations of $A$ and $U$, namely $B = aA+bU$, $V = cA+dU$, with to-be-determined constants $a,b,c,d$. 
\end{itemize}

Considering these calculations, we shall require 
$$\alpha = 2, \quad \beta_{AB} = 0, \quad a+2b = -c, \quad \text{and} \quad a=c+2d,$$
to obtain the symmetric system, thus one convenient choice is 
$$\alpha = 2, \quad \beta = 0, \quad a=1, \quad b=0, \quad c=-1, \quad d=1.$$

Therefore, adding twice of (\ref{constraint_HHandSigma}) to (\ref{evolution_U}), we have the following symmetrized evolution equations in new variables $(\nn, e_A^\Lambda, A_A, V_A, \HH, \Sigma_{AB})$ with $V_A=U_A-A_A$: 

\smallskip 

{\bf Symmetrized evolution equations:} 
\begin{subequations}
    \begin{align}
        e_0(\nn) &= 2 \nn \HH, \label{evolution_nn'} \\ 
        e_0 \left( e_A^\Lambda \right) &= - \left( \HH \delta_A^B + \Si AB \right) e_B^\Lambda, \label{evolution_e'} \\ 
        e_0(A_A) + e_A(\HH) - e_B \left( \Si AB \right) &= - \HH (2A_A+ V_A) + V^B \Sigma_{AB}, \label{evolution_A'} \\ 
        e_0(V_A) - e_A(\HH) - e_B \left( \Si AB \right) &= \HH (A_A + 2V_A) - (5A^B + 2V^B) \Sigma_{AB} + 2 t^{-1} \nn^{-1} (A_A+V_A), \label{evolution_U'} \\ 
        2e_0(\HH) - e_A(V^A) + e_A(A^A) &= - 4\HH^2 + V \cdot V + 2 A \cdot A + V \cdot A - 2 \nn^{-1} t^{-1} \HH, \label{evolution_HH'} \\ 
        e_0(\Sigma) - e \htimes V + e \htimes A &= (A+V) \htimes (2A+V) - 2 \HH \Sigma - \nn^{-1} t^{-1} \Sigma, \label{evolution_Sigma'}  
    \end{align}
\end{subequations}

\smallskip 


\section{The Fuchsian system} \label{Section_Fuchsian_system}

In this section, we convert the evolution system into a Fuchsian form. Since the system has been symmetrized as (\ref{evolution_nn'})-(\ref{evolution_Sigma'}), we define $\WW$ to be the unknown vector field 
\begin{equation}
    \WW := \left( \nn, e_P^\Lambda, A_P, V_P, \HH, \Sigma_{PQ} \right)^T. 
\end{equation}
Then we arrive at the following symmetric quasi-linear hyperbolic system for $\WW$: 
\begin{equation} 
    \BB^0 \pa_t \WW + \nn e_C^\Lambda \BB^C \pa_\Lambda \WW = \GG(t,\WW), \label{HyperbolicSystemWW}
\end{equation}
where 
\begin{subequations}
    the coefficient matrices are 
    \begin{align}
        \BB^0 &:= \diag \left\{ 1, \delta^P_A, \delta^P_A, \delta^P_A, 2, \delta^P_A \delta^Q_B \right\}, \label{def_BB0} \\ 
        \BB^C &:= \begin{pmatrix}
            0 & 0 \\ 
            0 & \BB^C_*
        \end{pmatrix}, \label{def_BBC} \\ 
        \BB^C_* &:= \begin{pmatrix}
            0 & 0 & - \delta^C_A & \delta^{\langle P}_{A} \delta^{Q \rangle C} \\ 
            0 & 0 & \delta^C_A & \delta^{\langle P}_{A} \delta^{Q \rangle C} \\ 
            - \delta^{CP} & \delta^{CP} & 0 & 0 \\ 
            \delta^C_{\langle A} \delta^P_{B \rangle} & \delta^C_{\langle A} \delta^P_{B \rangle} & 0 & 0 
        \end{pmatrix}, \label{def_BBC*} 
    \end{align} 
    and the nonlinear terms are 
    \begin{align}
        \GG(t,\WW) := \left( G_1, G_2, G_3, G_4, G_5, G_6 \right)^T, 
    \end{align}
    with 
    \begin{align*}
        G_1 &:= 2 \nn^2 \HH, \\ 
        G_2 &:= - \left( \HH \delta_A^B + \Si AB \right) \nn e_B^\Lambda, \\ 
        G_3 &:= - \nn \HH (2A_A+ V_A) + \nn V^B \Sigma_{AB}, \\ 
        G_4 &:= \nn \HH (A_A + 2V_A) - (5A^B + 2V^B) \nn \Sigma_{AB} + 2 t^{-1} (A_A+V_A), \\ 
        G_5 &:= - 4\nn \HH^2 + \nn \left(V \cdot V + 2 A \cdot A + V \cdot A\right) - 2 t^{-1} \HH, \\ 
        G_6 &:= \nn (A+V) \htimes (2A+V) - 2\nn \HH \Sigma - t^{-1} \Sigma. 
    \end{align*}
\end{subequations}

Anticipating that the leading asymptotic behavior of unknown vector fields will be encapsulated by the background quantities (\ref{background_neAU}) and (\ref{background_HSigma}), we introduce the change of variables: 
\begin{subequations}
    \begin{gather}
        \tn := t^{\eps_0 - r_0/2} \nn, \label{change_tn} \\ 
        \te^\Lambda_I := t^{\eps_1 + r_I/2 - r_0/2} \nn e_I^\Lambda, \\ 
        \tA_A := t \nn A_A, \\ 
        \tU_A := t \nn U_A, \\ 
        \tH := t \nn \HH - \frac {r_0}{4}, \\ 
        \tSigma_{AB} := t \nn \Sigma_{AB} - \left( \frac{r_{AB}}{2} - \frac {r_0 \delta_{AB}}{4} \right), \label{change_tSigma}
    \end{gather}
\end{subequations}
where $\eps_0, \eps_1>0$ are sufficiently small parameters to be determined, and $r_{AB}$ as defined before.

The hyperbolic system (\ref{evolution_nn'})-(\ref{evolution_Sigma'}) then transforms into the following Fuchsian-form hyperbolic system: 
\begin{equation} 
    \tBB^0 \pa_t \tWW = \sum_C t^{r_0/2 - r_C/2 - \eps_1} \te_C^\Lambda \tBB^C \pa_\Lambda \tWW + \frac 1t \scrB \PP \tWW + \frac 1t \tFF, \label{FuchsianSystemtWW}
\end{equation}
where the unknown vector $\tWW$ is 
\begin{align}
    \tWW &:= \left( \tn, \te_P^\Lambda, \tA_P, \tV_P, \tH, \tSigma_{PQ} \right)^T; \label{def_tWW}
\end{align}
\begin{subequations}
    the coefficient matrices $\tBB^i$ are 
    \begin{align}
        \tBB^0 &:= \diag \left\{ 1, \delta^P_A, \delta^P_A, \delta^P_A, 2, \delta^P_A \delta^Q_B \right\}, \label{def_tBB0} \\ 
        \tBB^C &:= \begin{pmatrix}
            0 & 0 \\ 
            0 & \tBB^C_*
        \end{pmatrix}, \label{def_tBBC} \\ 
        \tBB^C_* &:= \begin{pmatrix}
            0 & 0 & - \delta^C_A & \delta^{\langle P}_{A} \delta^{Q \rangle C} \\ 
            0 & 0 & \delta^C_A & \delta^{\langle P}_{A} \delta^{Q \rangle C} \\ 
            - \delta^{CP} & \delta^{CP} & 0 & 0 \\ 
            \delta^C_{\langle A} \delta^P_{B \rangle} & \delta^C_{\langle A} \delta^P_{B \rangle} & 0 & 0 
        \end{pmatrix}; \label{def_tBBC*} 
        \end{align} 
    the constant matrices $\PP$ and $\scrB$ are 
        \begin{align}
        \PP &:= \diag \left\{ 1, \delta^P_A, \delta^P_A, \delta^P_A, 0, 0 \right\}; \label{def_PP} \\ 
        \scrB &:= \begin{pmatrix}
            \scrB_* & 0 \\ 
            0 & I_2 
        \end{pmatrix}, \label{def_scrB} \\ 
        \scrB_* &:= \begin{pmatrix}
            \eps_0 & & & \\ 
            & \eps_1 & & \\ 
            & & 1 + \frac {r_0}{2} - \frac {r_A}{2} & \\ 
            & & 2-3r_A+2r_0 & 3\left( 1 + \frac {r_0}{2} - \frac {r_A}{2} \right) 
        \end{pmatrix}; \label{def_scrB*} 
    \end{align} 
    and the nonlinear terms $\tFF$ are 
    \begin{align} 
        \tFF &:= \left( \tn \tH, (\tH + \tSigma)*\te^\Lambda, (\tH + \tSigma)*\tA, \right. \label{def_tFF}  \\ 
        &\quad \left. (\tH + \tSigma)*(\tA + \tV), \tA * (\tA + \tV), \tA * (\tA + \tV) \right)^T, \nonumber 
    \end{align}
    where $A*B$ denotes a (contracted) tensor field $A \otimes B$ with constant coefficients. 
\end{subequations}

\smallskip


\section{Propagation of the constraints} \label{Section_Propagation_of_Constraints}

In this section, we treat the propagation of the constraint equations, which is achieved by analyzing the associated linear hyperbolic system. To this aim, we define the constraints 
\begin{subequations}
    \begin{align}
        \EEE_{AB}^\Lambda &:= e_{[A} \left( e_{B]}^\Lambda \right) - A_{[A} e_{B]}^\Lambda, \label{def_constraintEEE} \\ 
        \UUU_{AB} &:= e_{[A} (U_{B]}) - A_{[A} U_{B]}, \label{def_constraintUUU} \\ 
        \NNN_A &:= \nn^{-1} e_A(\nn) - U_A, \label{def_constraintNNN} \\ 
        \MMM_A &:= e_A(\HH) - e_B \left(\Si AB \right) + 2A^B \Sigma_{AB} - t^{-1} \nn^{-1} U_A, \label{def_constraintMMM} \\ 
        \HHH &:= e_A(A^A) - A \cdot A + H^2 - \frac 12 \Sigma \cdot \Sigma + 2 \nn^{-1} t^{-1} \HH. \label{def_constraintHHH} 
    \end{align}
\end{subequations}
To avoid potential singularities involving the lapse function $\nn$ in the derivation of the following propagation equations, we include the factor $\nn^{-1}$ in the definition of the constraint $\NNN_A$ (\ref{def_constraintNNN}). 

\medskip

The following lemma is the main result of this section:

\begin{lemma} \label{ConstraintPropagation}
    Let $s \ge 5$. Consider the system (\ref{HyperbolicSystemWW}) with initial data 
    \begin{equation} \label{initialdata} 
        \left.\WW\right|_{t=t_0} = \WW_0 \in H^s(\Sigma_{t_0}), 
    \end{equation}
    then there exist a $t_1 \in (0,t_0)$ and a unique classical solution 
    $$\WW \in \bigcap_{j=0}^s C^j((t_1,t_0],H^{s-j}(\Sigma_t)).$$ 
    Moreover, there hold 
    \begin{itemize}
        \item[(1)] If the solution $\WW$ satisfies         $$\Norm{\WW}_{L^\infty((t_1,t_0]\times W^{1,\infty}(\Sigma_t))} < \infty,$$
        then $\WW$ can be continuously extended to a solution in $$\bigcap_{j=0}^s C^j((t_1',t_0],H^s(\Sigma_t))$$ 
        for some $t_1'<t_1$; 
        \item[(2)] Suppose the constraints $(\EEE, \UUU, \NNN, \MMM, \HHH) = 0$ vanish initially on $\Sigma_{t_0}$. Then 
        $$(\EEE, \UUU, \NNN, \MMM, \HHH) \equiv 0$$ 
        holds on the entire domain of existence of the solutions. 
    \end{itemize}
\end{lemma}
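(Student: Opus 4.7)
The plan is to split the lemma into two parts: local existence with a continuation criterion, and propagation of the constraints. For the first part, I observe that the evolution system (\ref{HyperbolicSystemWW}) has already been cast in symmetric quasi-linear hyperbolic form: the matrix $\BB^0$ in (\ref{def_BB0}) is positive diagonal, and each spatial matrix $\BB^C$ in (\ref{def_BBC*}) is manifestly symmetric. Since $s \ge 5 > \tfrac{n}{2} + 1$ with spatial dimension $n=2$, the standard theory of quasi-linear symmetric hyperbolic systems (e.g.\ Kato--Majda) delivers a unique classical solution $\WW \in \bigcap_{j=0}^{s} C^j((t_1, t_0], H^{s-j}(\Sigma_t))$, together with the continuation criterion: the solution persists in $H^s$ as long as its $L^\infty_t W^{1,\infty}_x$ norm remains bounded. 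This immediately gives assertion (1).

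For assertion (2), the plan is to derive a closed linear symmetric hyperbolic system of the form
\begin{equation*}
\AAA^0 \pa_t \XXX + \nn \AAA^R e_R^\Lambda \pa_\Lambda \XXX = \BBB \XXX,
\end{equation*}
for the vector of constraints $\XXX := (\EEE_{AB}^\Lambda, \UUU_{AB}, \NNN_A, \MMM_A, \HHH)^T$, where $\AAA^0$ and $\AAA^R$ are symmetric and $\BBB$ depends smoothly on $\WW$ and its first frame derivatives. The derivation proceeds by applying $e_0$ to each defining expression (\ref{def_constraintEEE})--(\ref{def_constraintHHH}), using the evolution equations (\ref{evolution_nn'})--(\ref{evolution_Sigma'}) to replace every $e_0$-derivative of the unknowns, and then commuting spatial frame vectors via (\ref{e0eA})--(\ref{eAeB}) together with the Jacobi identity so that every term not linear in $\XXX$ cancels. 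This is precisely the point where the symmetrization highlighted in Remark~\ref{KeySymmetrization} does the essential work: only after replacing the Ricci-type source by its symmetric part --- which is legitimate exactly because the Jacobi identity forces the antisymmetric residue to vanish modulo $\XXX$ --- do the principal-symbol matrices $\AAA^R$ come out symmetric, making the constraint propagation system Friedrichs-symmetrizable and hence well-posed in $H^{s-1}$.

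Given the linear symmetric hyperbolic system for $\XXX$ with vanishing initial data, a standard energy argument concludes: differentiating $\tfrac{1}{2}\int_{\Sigma_t}\langle \AAA^0 \XXX, \XXX\rangle$ in $t$ and controlling the zeroth-order source yields a Gr\"onwall inequality which forces $\XXX \equiv 0$ on the entire interval of existence of $\WW$; finite speed of propagation is automatic from the symmetric hyperbolic structure, and since $\Sigma_t \simeq \mathbb{T}^2$ is closed there is no boundary to handle. The main obstacle is the algebraic bookkeeping that produces the propagation system: the equations for $e_0(\EEE_{AB}^\Lambda)$, $e_0(\UUU_{AB})$ and $e_0(\MMM_A)$ each generate what look like second-order derivatives of the unknowns, which must be rearranged using commutators and the Jacobi identity so that only $\XXX$ itself and its first spatial frame derivatives remain on the right, with matching symmetric coefficient matrices on the left. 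Arranging this without losing derivatives is precisely where the pre-symmetrization of the evolution system pays off; without it the propagation system would fail to be strongly hyperbolic, as flagged in Remark~\ref{KeySymmetrization}.
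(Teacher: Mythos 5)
Your treatment of assertion (1) matches the paper's: both invoke the standard existence and continuation theory for symmetric quasi-linear hyperbolic systems (the paper cites Chapter~10 of Benzoni-Gavage--Serre rather than Kato--Majda, but this is the same circle of results). The genuine gap is in your analysis of the propagation system for the constraints.

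You claim that after the symmetrization the principal-symbol matrices $\AAA^R$ in \eqref{XXXequation} ``come out symmetric, making the constraint propagation system Friedrichs-symmetrizable,'' and you then close the argument with a Friedrichs energy estimate for $\tfrac{1}{2}\int\langle\AAA^0\XXX,\XXX\rangle$. This is not what happens, and it is not what the paper does. Inspecting the $\AAA^R$ displayed after \eqref{XXXequation}, the off-diagonal blocks are \emph{not} transposes of one another (compare the $(1,2)$-block $-\delta^P_A\delta^{QR}$ against the $(2,1)$-block $2\delta^R_{[A}\delta^P_{B]}$), so the system is not in Friedrichs-symmetric form and your pairing with $\XXX$ does not produce a clean energy inequality. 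Remark~\ref{KeySymmetrization} only asserts that \emph{without} the symmetrization the system fails to be even strongly hyperbolic; with the symmetrization the system becomes strongly hyperbolic, not Friedrichs-symmetric. The paper's actual mechanism is Lemma~\ref{EigenvalueofAAA}: the principal symbol $\AAA^I\xi_I$ is shown to be diagonalizable with constant eigenvalues $0,\pm|\xi|$ and constant multiplicities, so \eqref{XXXequation} is \emph{constantly hyperbolic}, which yields a symbolic (pseudodifferential) symmetrizer. Uniqueness of the trivial solution then follows from the well-posedness theory for constantly hyperbolic systems (Theorem~2.7 of Benzoni-Gavage--Serre), not from a naive symmetric energy estimate. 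This distinction is essential and your proof sketch would not go through as written.

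A secondary omission: you treat $\XXX = (\EEE,\UUU,\NNN,\MMM,\HHH)$ as a single coupled system, but the paper first observes from \eqref{propagation_E} that $\EEE$ decouples and propagates by an ODE along $e_0$, giving $\EEE\equiv0$ immediately; the constantly-hyperbolic analysis is then carried out only for the reduced vector $(\MMM_P,\UUU_{PQ},\HHH,\NNN_P)$. This reduction is needed because including $\EEE$ would alter the principal symbol and the eigenvalue count in Lemma~\ref{EigenvalueofAAA}.
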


\subsection{Evolution equations for the constraints} 

After lengthy calculations, the following evolution equations hold for the constraints (\ref{def_constraintEEE})-(\ref{def_constraintHHH}): 
\begin{subequations}
    \begin{align}
        e_0 \left( \EEE_{AB}^\Lambda \right) &= - 2 \HH \EEE_{AB}^\Lambda + 2 \Si {[A}{C} \EEE_{B]C}^\Lambda, \label{propagation_E} \\ 
        e_0 \left( \MMM_A \right) &= e^B (\UUU_{AB}) - e_A \left( \HHH \right) - 2U_A \HHH - t^{-1} \nn^{-1} \left( \HH \delta_A^B + \Si AB \right) \NNN_B \label{propagation_M} \\ 
        &\quad + 2(U^B + A^B) \UUU_{AB} + \left( (t^{-1} \nn^{-1} - 3 \HH) \delta_A^B - \Si AB \right) \MMM_B, \nonumber \\ 
        e_0 \left( \UUU_{AB}  \right) &= - 2 e_{[A} \left( \MMM_{B]} \right) + 2 (A_{[A} - U_{[A}) \MMM_{B]}, \label{propagation_U} \\ 
        e_0 \left( \HHH \right) &= - e_A \left( \MMM^A \right) + (A^A - 2U^A) \MMM_A + t^{-1} \nn^{-1} U^A \NNN_A - ( 4 \HH + 2 \nn^{-1} t^{-1} ) \HHH, \label{propagation_H} \\ 
        e_0 \left( \NNN_A \right) &= - \left( \HH \delta_A^B + \Si AB \right) \NNN_B + 2 \MMM_A. \label{propagation_N} 
    \end{align}
\end{subequations}

From (\ref{propagation_E}), we immediately see that $\EEE$ decouples from the remaining constraints, and hence 
\begin{equation} \label{EEE} 
    \EEE_{AB}^\Lambda \equiv 0, 
\end{equation}
whenever it vanishes initially. The remaining constraint variables are coupled together and satisfy a linear hyperbolic system. Set 
$$\XXX := \left(\MMM_P, \UUU_{PQ}, \HHH, \NNN_P \right),$$
then $\XXX$ obeys 
\begin{equation} \label{XXXequation}
    \AAA^0 \pa_t \XXX + \nn \AAA^R e_R^\Lambda \pa_\Lambda \XXX = \BBB \XXX, 
\end{equation}
where the coefficient matrices are given by 
\begin{subequations}
    \begin{gather}
        \AAA^0 = \diag \left\{ \delta_A^P, \delta_A^P \delta_B^Q, 1, \delta_A^P \right\}, \\ 
        \AAA^R = \begin{pmatrix}
            0 & - \delta_A^P \delta^{QR} & \delta_A^R & 0 \\ 
            2 \delta^R_{[A} \delta^P_{B]} & 0 & 0 & 0 \\ 
            \delta^{PR} & 0 & 0 & 0 \\ 
            0 & 0 & 0 & 0 
        \end{pmatrix}, \\ 
        \BBB = \begin{pmatrix}
            t^{-1} \delta_A^P - 3 \nn \HH \delta_A^P - \nn \Si AP & 2 \nn (U^Q+A^Q) \delta_A^P & -2 \nn U_A & - t^{-1} \left( \HH \delta_A^P + \Si AP \right) \\ 
            2 \nn (A_{[A} - U_{[A})\delta^P_{B]} & 0 & 0 & 0 \\ 
            \nn (A^P - 2 U^P) & 0 & -4 \nn \HH - 2 t^{-1} & t^{-1} U^P \\ 
            2 \nn \delta_A^P & 0 & 0 & - \nn \left( \HH \delta_A^P + \Si AP \right) 
        \end{pmatrix}. 
    \end{gather}
\end{subequations}

\subsection{A strongly hyperbolic linear system}

We now establish the local existence and uniqueness for (\ref{HyperbolicSystemWW}) and prove the uniqueness of the trivial solution to (\ref{XXXequation}). This is mainly based on the strong hyperbolicity of the linear system (\ref{XXXequation}). 

\begin{lemma} \label{EigenvalueofAAA}
    Let $\AAA$ be the principal symbol of (\ref{XXXequation}) defined by 
    $$\AAA := \AAA^I \xi_I.$$ 
    Then for any $\xi \ne 0$ and any point $(x,t) \in M_{t_1,t_0}$, the matrix $\AAA$ is diagonalizable with spectrum consisting of the constant eigenvalues $0, \pm |\xi|$ and constant multiplicities $5,2$ respectively. Consequently, the linear hyperbolic system (\ref{XXXequation}) is constantly hyperbolic and thus admits a symbolic symmetrizer. 
\end{lemma}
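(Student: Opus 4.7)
The plan is to diagonalize the principal symbol $\AAA(\xi) := \AAA^R \xi_R$ by direct inspection. Two structural observations reduce the work enormously. First, by the explicit form of the matrices $\AAA^R$ in (\ref{XXXequation}), the $\NNN$-row and $\NNN$-column vanish identically, so $\NNN_A$ contributes a $2$-dimensional subspace to $\ker \AAA(\xi)$ for every $\xi \ne 0$. Second, the remaining $(\MMM, \UUU, \HHH)$-block is covariant under $O(2)$ rotations of the frame indices, so I may rotate to the canonical direction $\xi = (|\xi|, 0)$ without loss of generality. In this frame the eigenvalue equation $\AAA(\xi) \XXX = \lambda \XXX$ in the nontrivial block becomes
\begin{align*}
    |\xi|(\HHH - \UUU_{11}) &= \lambda \MMM_1, \qquad -|\xi|\, \UUU_{21} = \lambda \MMM_2, \\
    |\xi|\MMM_2 &= \lambda \UUU_{12}, \qquad\quad\ -|\xi|\MMM_2 = \lambda \UUU_{21}, \\
    |\xi|\MMM_1 &= \lambda \HHH, \qquad\qquad\ \ \, 0 = \lambda \UUU_{11} = \lambda \UUU_{22}.
\end{align*}

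For $\lambda \ne 0$ the last two relations force $\UUU_{11} = \UUU_{22} = 0$, and the remaining equations decouple into two commuting $2\times 2$ wave blocks, $(\MMM_1, \HHH)$ and $(\MMM_2, \UUU_{12})$ (with $\UUU_{21} = -\UUU_{12}$ automatic), each yielding $\lambda^2 = |\xi|^2$. This produces eigenvalues $\pm |\xi|$, each of multiplicity $2$, with explicit eigenvectors. For $\lambda = 0$, the system forces $\MMM_A = 0$, $\UUU_{21} = 0$ and $\UUU_{11} = \HHH$ while leaving $\UUU_{12}, \UUU_{22}, \HHH$ and $(\NNN_1, \NNN_2)$ free, giving $3 + 2 = 5$ linearly independent zero eigenvectors. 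Since $\XXX$ has total dimension $2 + 4 + 1 + 2 = 9 = 5 + 2 + 2$, this exhausts the spectrum and exhibits diagonalizability at every $(x,t)$ and every $\xi \ne 0$.

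All three eigenvalues $\{0, +|\xi|, -|\xi|\}$ and their multiplicities $\{5, 2, 2\}$ are manifestly independent of the spacetime point and smoothly (in fact $0$-homogeneously) dependent on $\xi$. Hence the symbol is constantly hyperbolic in the sense of Kreiss, and the standard construction of a smooth symbolic symmetrizer $S(\xi) = \sum_\lambda \Pi_\lambda^\top \Pi_\lambda$ from the orthogonal eigenprojectors applies, completing the proof.

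The main obstacle, and the reason the zero-eigenvalue multiplicity is $5$ rather than the naive $3$, is the careful treatment of $\UUU_{AB}$. Its propagation equation (\ref{propagation_U}) is antisymmetric in $(A,B)$ on the right-hand side, but the $\MMM$-row of $\AAA^R$ couples to the full tensor $\UUU_{AQ}$. I must therefore carry $\UUU_{AB}$ as a general $2\times 2$ tensor rather than identifying it with its antisymmetric part, and observe that its symmetric part lies in $\ker \AAA(\xi)$ and propagates only through the zeroth-order matrix $\BBB$, which is consistent with the eventual enforcement $\UUU_{(AB)} \equiv 0$ once the initial constraints $\XXX|_{t=t_0} = 0$ are imposed.
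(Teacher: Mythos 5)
Your proof is correct, and it takes a genuinely different route from the paper. The paper computes the characteristic polynomial of the $9\times 9$ symbol directly, asserts the factorization $\lambda^5(\lambda-|\xi|)^2(\lambda+|\xi|)^2$, and then characterizes the eigenspaces $E_1$, $E_{2,\pm}$ through (somewhat implicit) tensorial identities. You instead exploit the $O(2)$-covariance of the frame-index structure to reduce to the canonical direction $\xi=(|\xi|,0)$, at which point the eigenvalue problem decouples by inspection into the trivial $\NNN$-block, two $2\times 2$ wave blocks $(\MMM_1,\HHH)$ and $(\MMM_2,\UUU_{[12]})$, and the residual kernel directions. This makes the count $5+2+2=9$ and the geometric-equals-algebraic multiplicity check transparent, whereas the paper's approach requires the reader to verify the stated eigenspace descriptions by ``a standard linear algebra argument.'' You also isolate the key subtlety that the paper leaves implicit: the $\UUU$-slot must be carried as a full $2\times 2$ tensor (not just its antisymmetric part), so that its symmetric part sits in $\ker\AAA(\xi)$ and accounts for two of the five zero modes; this is entirely consistent with the paper's eigenspace $E_1$, whose defining relations reference $\UUU_{AB}+\UUU_{BA}$. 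Your concluding invocation of constant hyperbolicity and the associated symbolic symmetrizer is also correct, and is further simplified here by the fact that $\AAA^R$ is built purely from Kronecker deltas, so the frame symbol does not actually depend on $(t,x)$.
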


\begin{proof}
    The principal symbol of (\ref{XXXequation}) is 
    \begin{equation}
        \AAA = \begin{pmatrix}
            0 & - \delta_A^P \xi^Q & \xi_A & 0 \\ 
            2 \xi_{[A} \delta^P_{B]} & 0 & 0 & 0 \\ 
            \xi^{P} & 0 & 0 & 0 \\ 
            0 & 0 & 0 & 0 
        \end{pmatrix}, 
    \end{equation}
    then its characteristic polynomial factors as 
    \begin{equation*}
        \det(\lambda \, \id - \AAA) = \lambda^5 (\lambda - |\xi|)^2 (\lambda + |\xi|)^2. 
    \end{equation*}
    We can check by a standard linear algebra argument that the eigenspaces associated with $\lambda_1 = 0$ and $\lambda_{2, \pm} = \pm |\xi|$ can be described by  
    \begin{align*}
        E_1 &= \left\{ \left( 0, \UUU_{AB}, \HHH, \NNN_A \right): |\xi|^2 \HHH = \frac 12 \xi^A \xi^B (\UUU_{AB} + \UUU_{BA}), \ \  |\xi|^2 \xi^B \UUU_{AB} = \frac 12 \xi_A \xi^C \xi^D (\UUU_{AB} + \UUU_{BA}) \right\}, \\ 
        E_{2,\pm} &= \left\{ \left( \MMM_A, \UUU_{AB}, \HHH, 0 \right): |\xi| \UUU_{AB} = \pm 2 \xi_{[A} \MMM_{B]}, \ \ |\xi| \HHH = \pm \xi^A \MMM_A \right\}. 
    \end{align*}
    Hence $\AAA$ is diagonalizable with stated spectrum, 
    which proves the lemma. 
\end{proof}

\begin{proofoflem}{\ref{ConstraintPropagation}}
    The statement (1) follows from the classical theory on symmetric first order quasi-linear hyperbolic equations, see e.g. Chapter 10 of \cite{benzoni2006multi}.

    Under the regularity hypothesis in (1), the coefficients in (\ref{XXXequation}) lie in $C^0((t_1,t_0],H^s(\Sigma_t))$. Hence the uniqueness result in Theorem 2.7 of \cite{benzoni2006multi} applies, which, combined with (\ref{EEE}), yields the statement (2). 
\end{proofoflem}

\smallskip


\section{The main theorem and its proof} 

In this section, we establish the main theorem and present its proof. The past global existence and asymptotic behavior of solutions are obtained from the Fuchsian theorem, see Section \ref{Section_Fuchsian_Theorem} and Appendix \ref{Appendix_Fuchsian}. The statement of the main theorem \thmref{Main} appears in Section \ref{Section_the_main_theorem}. 

\subsection{The Fuchsian theorem} \label{Section_Fuchsian_Theorem}

The following proposition states the main result of this subsection. 

\begin{proposition} \label{Prop_Fuchsian}
    Suppose $s \ge 5$ and $\tWW_0 \in H^s(\mathbb T^2)$. Consider the Fuchsian system (\ref{FuchsianSystemtWW}) together with initial data (\ref{initialdata}). 
    Then there exists $\delta_0>0$ such that if $$\left\|\tWW_0\right\|_{H^s(\mathbb T^2)}<\delta_0,$$ 
    then there is a unique classical solution 
    \begin{equation*}
        \tWW \in C^0 \big( (0,t_0], H^s(\mathbb T^2) \big) \cap L^\infty \big( (0,t_0], H^s(\mathbb T^2) \big) \cap C^1 \big( (0,t_0], H^{s-1}(\mathbb T^2) \big). 
    \end{equation*}
    Moreover, the limit $\lim_{t \rightarrow 0^-} \PP^\perp \tWW(t)$ exists in $H^{s-1}(\mathbb T^2)$, denoted by $\PP^\perp \tWW(0)$; and the solution $\tWW$ satisfies the estimates: 
    \begin{subequations}
        \begin{gather}
            \Norm{\tWW(t)}_{H^s(\TT^2)}^2 + \int_t^{t_0} \frac {1}{\tau} \Norm{\PP \tWW(\tau)}_{H^s(\TT^2)}^2 \dd \tau \lesssim \Norm{\tWW_0}_{H^s(\TT^2)}^2, \label{Prop_energyestimate} \\ 
            \Norm{\PP \tWW(t)}_{H^{s-1}(\TT^2)} + \Norm{\PP^\perp \tWW(t) - \PP^\perp \tWW(0)}_{H^{s-1}(\TT^2)} \lesssim t^\zeta, \label{Prop_asymbehavior}
        \end{gather}
    \end{subequations}
    for all $t \in (0,t_0]$ and some constant $\zeta>0$. 
\end{proposition}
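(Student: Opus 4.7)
The plan is to apply the abstract Fuchsian existence result of Appendix~\ref{Appendix_Fuchsian}, namely \thmref{FuchsianTheorem}, to the rescaled system (\ref{FuchsianSystemtWW}), so that the proof of the proposition reduces entirely to verifying its structural hypotheses on the coefficient matrices $\tBB^0, \tBB^C, \scrB, \PP$ and the nonlinear source $\tFF$. Once these are in place, the existence, uniqueness and the weighted bound (\ref{Prop_energyestimate}) are immediate consequences of the abstract theorem, and the pointwise decay rate $t^\zeta$ in (\ref{Prop_asymbehavior}) is extracted by a standard interpolation between the $H^s$ energy and the $H^{s-1}$ equation.

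For the symmetric-hyperbolic structure, $\tBB^0$ in (\ref{def_tBB0}) is a constant, diagonal, uniformly positive definite matrix, and the spatial blocks $\tBB^C$ from (\ref{def_tBBC})--(\ref{def_tBBC*}) are constant symmetric matrices by the paired off-diagonal entries $-\delta^C_A \leftrightarrow -\delta^{CP}$, $\delta^C_A \leftrightarrow \delta^{CP}$ and $\delta^{\langle P}_A \delta^{Q\rangle C} \leftrightarrow \delta^C_{\langle A} \delta^P_{B\rangle}$. The only nonconstant coefficient in the principal part is the prefactor $t^{r_0/2 - r_C/2 - \eps_1}$, whose exponent exceeds $-1-\eps_1$ by (\ref{ValueofrC}), so it is strictly less singular than the $1/t$ damping weight --- precisely the regime the Fuchsian theorem tolerates. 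For the singular damping, the block decomposition $\scrB = \diag\{\scrB_*, I_2\}$ in (\ref{def_scrB}) respects the $\PP$-splitting induced by (\ref{def_PP}), and the lower-triangular form of $\scrB_*$ makes its spectrum explicit: the eigenvalues $\eps_0, \eps_1, 1 + r_0/2 - r_A/2$ and $3(1 + r_0/2 - r_A/2)$ are all strictly positive once $\eps_0, \eps_1 > 0$ are chosen, since $1 + r_0/2 - r_A/2 > 0$ is read off from $r_0 - r_A > -2$ in (\ref{ValueofrC}). This yields the coercivity $\langle \scrB \PP \tWW, \tWW \rangle \geq c_0 |\PP \tWW|^2$ that absorbs the spatial commutator and nonlinear remainders. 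Finally, direct inspection of (\ref{def_tFF}) shows that every component of $\tFF$ is either $\tn \tH$ or a bilinear contraction of small quantities, so $\tFF(t, 0) = 0$ with smooth dependence and $\Norm{\tFF[t, \tWW]}_{H^s} \lesssim \Norm{\tWW}_{H^s}^2$ for $s \geq 5$ by Moser product estimates.

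With all hypotheses verified, \thmref{FuchsianTheorem} directly produces the unique backward-in-time solution $\tWW$ with the claimed regularity and the weighted $H^s$ bound (\ref{Prop_energyestimate}). The existence of $\PP^\perp \tWW(0)$ then follows by integrating the $\PP^\perp$-component of (\ref{FuchsianSystemtWW}) from $t_0$ down to $t$: the right-hand side is of the form $\frac{1}{t}(\tFF + \text{lower-order couplings with } \PP \tWW)$, which is absolutely integrable thanks to the $t^\zeta$ decay of $\PP \tWW$ obtained by interpolating (\ref{Prop_energyestimate}). The main obstacle I anticipate is that the principal spatial coefficient becomes singular as $t \to 0^+$ in the direction $C = 2$, where the exponent equals $r_1/2 - \eps_1 < 0$; however, since $r_1 > -2$ by (\ref{ValueofrC}) and $\eps_1$ may be chosen arbitrarily small, this prefactor remains strictly less singular than $1/t$, so the associated commutator terms can be absorbed into the $\frac{1}{t} \scrB \PP \tWW$ damping by Young's inequality combined with the smallness assumption on $\tWW_0$, propagated by a standard bootstrap. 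This absorption is the sole nontrivial analytic step; all remaining hypotheses of the abstract theorem are purely algebraic and follow from the explicit formulas above.
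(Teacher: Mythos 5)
Your overall plan — verify the hypotheses of \thmref{FuchsianTheorem} and invoke it — is the same as the paper's. The algebraic observations you make about $\tBB^0$ being positive, the symmetry of $\tBB^C$, the spectrum of $\scrB_*$, and the size of the time prefactor are all correct. However, the substance of the verification is left incomplete in exactly the place where the Fuchsian framework has teeth.

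The central gap is your treatment of the nonlinearity. You conclude with ``$\Norm{\tFF[t,\tWW]}_{H^s}\lesssim\Norm{\tWW}_{H^s}^2$'', which is far weaker than what condition~(3) of \thmref{FuchsianTheorem} requires. Since only the $\PP$-projected variables decay to zero while $\PP^\perp\tWW$ merely converges to a nonzero limit, a bound by $\Norm{\tWW}_{H^s}^2$ is a bound by a small constant, \emph{not} by something integrable against $1/t$. What must be verified — and what the paper does, reading off (\ref{def_tFF}) together with (\ref{DecayingVariables}) and (\ref{ConvergingVariables}) — is the block structure
$\PP\tFF = (\PP^\perp\tWW)*(\PP\tWW)$ and $\PP^\perp\tFF = (\PP\tWW)*(\PP\tWW)$, so that every term carries at least one factor of the decaying component $\PP\tWW$, and the $\PP^\perp$-projected source carries two. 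Without this, the $\frac1t\tFF$ forcing in the $\PP^\perp$ equation fails to be integrable and the convergence of $\PP^\perp\tWW(t)$ as $t\to0^+$ would not follow. The same structural issue arises for the spatial flux: you argue only that the prefactor $t^{r_0/2-r_C/2-\eps_1}$ is milder than $1/t$, but condition~(4) additionally requires $\PP^\perp\tBB\PP^\perp=0$ and that $\PP\tBB\PP^\perp$, $\PP^\perp\tBB\PP$ be $O(\PP\tWW)$; the paper reads this directly from (\ref{def_tBBC})--(\ref{def_tBBC*}), and your smallness-and-Young's-inequality argument does not replace it, because there is no $1/t$ damping in the $\PP^\perp$ direction to absorb an undamped $\PP^\perp$-to-$\PP^\perp$ transport term. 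You also omit the divergence condition (5) entirely.

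One further inaccuracy: the $t^\zeta$ rate in (\ref{Prop_asymbehavior}) is not obtained by ``interpolation'' between the $H^s$ energy estimate (\ref{Prop_energyestimate}) and the $H^{s-1}$ equation. The integral bound in (\ref{Prop_energyestimate}) controls a time-average of $\Norm{\PP\tWW}^2$ and does not by itself give a pointwise decay rate. In the Appendix the rate is produced by projecting the system onto $\PP$, exploiting the coercivity of $\scrB$ (i.e.\ $\tilde\rho_s>0$) in that projection, and running Gronwall at each derivative level; the convergence rate for $\PP^\perp\tWW$ is then derived by integrating the $\PP^\perp$-projected equation using that decay. If you want to present this as part of the proposition's proof rather than citing the abstract theorem, you would need to reproduce that energy argument, not interpolate.
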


\begin{proof}
    The proposition follows from the abstract Fuchsian result \thmref{FuchsianTheorem} in Appendix \ref{Appendix_Fuchsian}, once we check all five Fuchsian conditions there. We verify these conditions below. 

    First recall (\ref{Kasner_exponents}) and (\ref{ValueofrC}). Define
    $$\tilde r_C := 1 + \frac {r_0}{2} - \frac {r_C}{2} - \eps_1.$$ 
    We choose $\eps_1$ so small that 
    \begin{equation} \label{Choice_eps1} 
        0 < \eps_1 < 1 + \frac {r_0}{2} - \frac {r_C}{2} = \frac {1-p_C}{p_3}, 
    \end{equation}
    which guarantees that $\tilde r_C>0$. 

    \smallskip 
    
    \begin{itemize}
        \item[(1 \& 2)] By definitions (\ref{def_tBB0}), (\ref{def_PP}) and (\ref{def_scrB}), the matrices $\PP$, $\tBB^0$ and $\scrB$ are 
        \begin{gather*}
            \PP := \diag \left\{ 1, \delta^P_A, \delta^P_A, \delta^P_A, 0, 0 \right\}; \\ 
            \tBB^0 := \diag \left\{ 1, \delta^P_A, \delta^P_A, \delta^P_A, 2, \delta^P_A \delta^Q_B \right\}, \\ 
            \scrB := \begin{pmatrix}
                \scrB_* & 0 \\ 
            0 & I_2 
            \end{pmatrix}, \\ 
            \scrB_* := \begin{pmatrix}
            \eps_0 & & & \\ 
            & \eps_1 & & \\ 
            & & 1 + \frac {r_0}{2} - \frac {r_A}{2} & \\ 
            & & 2-3r_A+2r_0 & 3\left( 1 + \frac {r_0}{2} - \frac {r_A}{2} \right) 
            \end{pmatrix}. 
        \end{gather*}
        So they are all constant, and thus automatically time-independent and covariantly constant. The projector property $\PP^2 = \PP$ holds, and we also find out that $\tBB^0$ and $\scrB$ have no non-negative eigenvalues, due to (\ref{ValueofrC}) again. Moreover, 
        $$\PP \tBB^0 \PP^\perp = \PP^\perp \tBB^0 \PP = 0, \quad \text{and} \quad [\PP, \scrB] = 0$$
        hold as well. 
        Consequently, the decaying and converging components of $\tWW$ are identified as 
        \begin{subequations}
            \begin{gather}
                \PP \tWW = \left( \tn, \te_I^\Lambda, \tA_A, \tU_A, 0, 0 \right), \label{DecayingVariables} \\ 
                \PP^\perp \tWW = \left( 0, 0, 0, 0, \tH, \tSigma_{AB} \right); \label{ConvergingVariables} 
            \end{gather}
        \end{subequations} 
         
        \item[(3)] From (\ref{def_tFF}), the nonlinear term $\tFF$ has the form 
        \begin{equation*}
            \tFF := \left( \tn \tH, (\tH + \tSigma)*\te^\Lambda, (\tH + \tSigma)*\tA, (\tH + \tSigma)*(\tA + \tV), \tA * (\tA + \tV), \tA * (\tA + \tV) \right)^T. 
        \end{equation*}
        Using (\ref{DecayingVariables}) and (\ref{ConvergingVariables}), we immediately see that the algebraic structures required by the Fuchsian conditions hold: 
        \begin{equation}
            \PP \tFF = (\PP^\perp \tWW) * (\PP \tWW), \quad \PP^\perp \tFF = (\PP \tWW) * (\PP \tWW); 
        \end{equation}
        \item[(4)] Set 
        $$\tBB^\Lambda := \sum_{C} \, t^{\tilde r_C} \te_C^\Lambda \tBB^C,$$ 
        with $\tBB^C$ as in (\ref{def_BBC}). From (\ref{def_tBBC}) and (\ref{def_tBBC*}), we have that 
        \begin{gather}
            \PP \tBB \PP^\perp = |t|^{-1+p} O(\PP \tWW), \quad \PP^\perp \tBB \PP = |t|^{-1+p} O(\PP \tWW), \quad \PP^\perp \tBB \PP^\perp = 0, 
        \end{gather}
        where $p := \min_C \{ \tilde r_C \} - \eps_1 > 0$; 
        \item[(5)] Lastly, a straightforward comuptation shows  
        \begin{equation}
            \Div \left( t^{r_0/2-r_C-2-\eps_1} \te_C^\Lambda \tBB^C \right) = t^{-1+p} O(\pa \tWW),  
        \end{equation}
        with $p$ defined as above.
        Therefore, the divergence condition is fulfilled. 
    \end{itemize}
    This verifies the hypotheses of \thmref{FuchsianTheorem}, and the proposition follows. 
\end{proof}

\subsection{The main theorem} \label{Section_the_main_theorem}

We are now ready to state a precise version of our main result. 

\begin{theorem} \label{Main}
    Consider the $(3+1)$-dimensional Einstein vacuum equations 
    \begin{gather} \label{EinsteinVacuum}
        {^{(3+1)}R}_{ab}(g) = 0 \quad \text{on \ $M^{3+1} \simeq \mathbb R^+ \times \mathbb T^3$.} 
    \end{gather}
    Let $\cg$ be a Kasner background solution to (\ref{EinsteinVacuum}): 
    \begin{equation}
        \cg = -(\mathrm dt)^2 + \sum_{I=1}^3 t^{2p_I} (\mathrm dx^I)^2, 
    \end{equation}
    where the Kasner exponents obey (\ref{Kasner_3dim}) and $p_1 < 0 < p_2 \, \& \, p_3$. 
    Fix a positive integer $s \ge 5$ and positive constants $\eps_0$ and $\eps_1$ with $\eps_1$ satisfying (\ref{Choice_eps1}). 
    Let $(\Sigma_{t=t_0} = \mathbb T^3, \gini, \kini)$ be polarized $U(1)$-symmetric initial data (in the sense of \defref{Symmetry}) for (\ref{EinsteinVacuum}), which satisfy the constraint system (\ref{constraint_e})-(\ref{constraint_A}). Denote by $\tWW_0$ the induced initial data: 
    \begin{equation}
        \tWW_0 := \left( ^{0}\tn, \, ^{0}\te^\Lambda_P, \, ^{0}\tA_P, \, ^{0}\tV_P, \, ^{0}\tH, \, ^{0}\tSigma_{PQ} \right)^T, 
    \end{equation}
    as given by (\ref{e0eA}), (\ref{eAeB}), (\ref{background_neAU}), (\ref{background_HSigma}), and the scalings (\ref{change_tn})-(\ref{change_tSigma}). 

    Then there exists $\delta_0>0$ such that, for every $\delta \in (0,\delta_0)$, if the initial data satisfies 
    \begin{equation*}
        \Norm{\tWW_0}_{H^s(\mathbb T^2)} < \delta, 
    \end{equation*}
    then the symmetrized evolution system admits a unique classical solution $\tWW$ on the entire past region $M_{0,t_0}$ with 
    \begin{equation*}
        \tWW := \left( \tn, \te^\Lambda_P, \tA_P, \tV_P, \tH, \tSigma_{PQ} \right)^T \in \, \bigcap_{k=0}^s C^k((0,t_0], H^{s-k}(\Sigma_t)) \subset C^1(M_{0,t_0}), 
    \end{equation*}
    satisfying $\tWW|_{t=t_0} = \tWW_0$ initially and all constraints (\ref{constraint_e})-(\ref{constraint_A}) on $M_{0,t_0}$. 

    \smallskip 
    
    Furthermore, the solutions enjoys the following properties: 
    \begin{itemize}
        \item[(1)] {\em Uniform control and refined asymptotics.} The solution $\tWW$ has the uniform bound 
        \begin{equation*}
            \Norm{\tWW(t)}_{H^s(\TT^2)} \lesssim \delta, \quad \text{for all $0< t \le t_0$, } 
        \end{equation*} 
        where the coefficient depends on $\delta_0$. There exists $\zeta > 0$ and functions $\tn_*, \tH_*, \tSigma_{*AB} \in H^{s-1}(\TT^2) \subset C^0(\TT^2)$ satisfying 
        \begin{subequations}
            \begin{gather}
                0 < \inf_{x \in \TT^2} \tn_*(x) < \sup_{x \in \TT^2} \tn_*(x) < \infty, \label{tn_*} \\ \Norm{\tH_*}_{L^\infty(\TT^2)} \lesssim \delta, \quad 
                \tSigma_{*AB} = \tSigma_{*BA}, \quad \delta^{AB} \tSigma_{*AB} = 0, \label{tH_*tSigma_*} \\ 
                \max \left\{ \Norm{\tH_*}_{H^{s-1}(\TT^2)}, \Norm{\tSigma_*}_{H^{s-1}(\TT^2)} \right\} \lesssim \delta \label{max_tH_*tSigma_*} 
            \end{gather}
        \end{subequations}
        such that the following asymptotic expansions hold in $H^{s-1}(\mathbb T^2)$: 
        \begin{subequations}
            \begin{gather}
                \tn = t^{2\tilde H_* + \eps_0} \tn_* \left( 1+ O(t^\zeta) \right), \label{Asymbehavior_tn} \\ \te_I^\Lambda = O(t^\zeta), \quad \tA = O(t^\zeta), \quad \tU = O(t^\zeta), \label{Asymbehavior_teAU} \\ 
                \tH = \tH_* + O(t^\zeta), \quad \tSigma_{AB} = \tSigma_{*AB} + O(t^\zeta); \label{Asymbehavior_tHtSigma} 
            \end{gather}
        \end{subequations}
        
        \item[(2)] {\em Reduced $(2+1)$-dimensional solution.} The pair 
        $$h = - \nn^2 (\mathrm d t)^2 + \sum_{\Sigma, \Omega = 1}^2 h_{\Sigma \Omega} \mathrm d x^\Sigma \mathrm d x^\Omega, \quad \phi := \log t$$ 
        defines a classical solution of the $(2+1)$-dimensional modified Einstein-scalar field system (\ref{ESForiginal_1}), (\ref{ESForiginal_2}) on $M_{0,t_0}/\mathbb T^1$, where the components of the metric $h$ are 
        $$\nn = t^{r_0/2-\eps_0} \tn, \quad h^{\Sigma \Omega} = \delta^{IJ} e_I^\Sigma e_J^\Omega, \quad e_I^\Lambda = t^{-r_I/2+\eps_0-\eps_1} \tn^{-1} \te_I^\Lambda. $$ 
        The second fundamental form $^{(2+1)}k_{AB}$ determined by $h$ and the $t$-hypersurfaces satisfies 
        \begin{gather*}
            ^{(2+1)}k^A_A = t^{-1-r_0/2-2 \tH_*} \tn_*^{-1} \left( \frac {r_0}{4} + \tH_* + O(t^\zeta) \right) 
        \end{gather*}
        and thus blows up as $t \rightarrow 0^+$. In particular, $t=0$ is a crushing singularity; 
        
        \item[(3)] {\em Physical $(3+1)$-dimensional solution.} The metric 
        $$g = - \nn^2 (\mathrm d t)^2 + \sum_{\Sigma, \Omega = 1}^2 g_{\Sigma \Omega} \mathrm d x^\Sigma \mathrm d x^\Omega + t^2 (\mathrm d x^3)^2 $$ 
        defines a classical polarized $U(1)$-symmetric solution of the $(3+1)$-dimensional Einstein vacuum equations on $M_{0,t_0}$, with components given by 
        \begin{gather*}
            \nn = t^{r_0/2 - \eps_0} \tn, \quad g^{\Sigma \Omega} = \delta^{IJ} e_I^\Lambda e_J^\Omega, \quad e_I^\Lambda = t^{- r_I/2 + \eps_0 - \eps_1} \tn^{-1} \te_I^\Lambda. 
        \end{gather*}
        The solution $g$ is also asymptotically Kasner with Kasner exponents $r_0(x)$, $r_1(x)$, $r_2(x)$, $r_3 = 1$. The second fundamental form $^{(3+1)}k_{AB}$ determined by $g$ and the $t$-hypersurfaces satisfies 
        \begin{equation*}
            ^{(3+1)}k^A_A = t^{-1-2 \tH_*} \tn_*^{-1} \left( \frac {r_0}{4} + \tH_* + O(t^\zeta) \right) 
        \end{equation*}
        and thus blows up as $t \rightarrow 0^+$, so $t=0$ is a crushing singularity for the physical spacetime as well; 

        \item[(4)] {\em Inextendibility and curvature blow-up.} The spacetime $\left( M_{0,t_0}, g \right)$ is $C^2$-inextendible at $t=0$. By Hawking's singularity theorem, it is past timelike geodesically incomplete. The Kretschmann scalar $^{(3+1)}R^{abcd} \, {^{(3+1)}R}_{abcd}$ blows up according to 
        \begin{gather*}
            ^{(3+1)}R^{abcd} \, ^{(3+1)}R_{abcd} = 2 t^{-4-8\tH_*} \tn_*^{-4} \left( \mathfrak R + O(t^\zeta) \right), \\ 
            \mathfrak R := \sum_{i=0}^2 r_i^2 + 24 \tH_*^2 + 12 r_0 \tH_* + 4 \tSigma_* \cdot \tSigma_* + 4 r^{AB} \tSigma_{*AB}, 
        \end{gather*}
        and if $(r_0, r_1, r_2) \ne (0,0,0)$, or equivalently $p_3 \ne 1$, then we have that 
        $$\inf_{x \in \TT^3} \mathfrak R > 0;$$ 
        while the curvature invariant $^{(2+1)}R^{ab} \, ^{(2+1)}R_{ab}$ for the $(2+1)$-dimensional modified Einstein-scalar field equations satisfies a similar blow-up law: 
        \begin{gather*}
            ^{(2+1)}R^{ab} \, ^{(2+1)}R_{ab} = \frac 14 t^{-4-8\tH_*} \tn_*^{-4} \left( \mathfrak R + O(t^\zeta) \right). 
        \end{gather*}
        Finally, the perturbed solutions exhibit asymptotically velocity-terms dominated (AVTD) behavior: spatial derivative terms in (\ref{evolution_e})-(\ref{evolution_Sigma}) decay faster than the dominate temporal terms, and hence become negligible as $t \rightarrow 0^+$. 
    \end{itemize}
\end{theorem}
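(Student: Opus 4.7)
The plan is to deduce Theorem \ref{Main} from the Fuchsian existence result (Proposition \ref{Prop_Fuchsian}), the constraint propagation Lemma \ref{ConstraintPropagation}, and a sharpening step that upgrades the bare Fuchsian decay of $\tn$ to the explicit pointwise profile (\ref{Asymbehavior_tn}).

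First I would apply Proposition \ref{Prop_Fuchsian} to the induced initial datum $\tWW_0$, choosing $\delta_0$ no larger than the Fuchsian smallness threshold. This yields a unique $\tWW \in C^0((0,t_0],H^s(\mathbb T^2)) \cap C^1((0,t_0],H^{s-1}(\mathbb T^2))$ solving (\ref{FuchsianSystemtWW}) with $\|\tWW\|_{L^\infty H^s}\lesssim\delta$, the energy inequality (\ref{Prop_energyestimate}), and the decay/convergence (\ref{Prop_asymbehavior}). In particular, $\PP^\perp\tWW(t)$ admits an $H^{s-1}$-limit as $t\to 0^+$, which defines $\tH_*$ and $\tSigma_{*AB}$; the bounds (\ref{tH_*tSigma_*})--(\ref{max_tH_*tSigma_*}) follow from $\|\tWW\|_{L^\infty H^s}\lesssim\delta$, the preservation of symmetry and tracelessness of $\tSigma_{AB}$ by the flow (both readable off $\tBB^C$ and $\tFF$), and the Sobolev embedding $H^{s-1}\hookrightarrow C^0$. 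I would then reconstruct $\WW=(\nn,e_P^\Lambda,A_P,V_P,\HH,\Sigma_{PQ})$ from $\tWW$ via (\ref{change_tn})--(\ref{change_tSigma}); since $(\gini,\kini)$ is polarized $U(1)$-symmetric and satisfies the Einstein vacuum constraints, $\WW_0$ verifies (\ref{constraint_e})--(\ref{constraint_A}) at $t_0$, so Lemma \ref{ConstraintPropagation} propagates the full constraint system to every slice of $M_{0,t_0}$.

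The decaying components $\te_I^\Lambda,\tA_P,\tV_P$ all lie in $\PP\tWW$, so (\ref{Prop_asymbehavior}) immediately gives (\ref{Asymbehavior_teAU}). For the lapse I would integrate the scalar ODE obtained by combining (\ref{evolution_nn'}) with (\ref{change_tn}) and $\tH=t\nn\HH-r_0/4$, namely
\begin{equation*}
    t\,\partial_t\tn(t,x)=\bigl(\eps_0+2\tH(t,x)\bigr)\,\tn(t,x).
\end{equation*}
Writing $\tH=\tH_*+R$ with $\|R(t)\|_{H^{s-1}}\lesssim t^\zeta$ and integrating from $t$ to $t_0$ yields (\ref{Asymbehavior_tn}), with $\tn_*(x):=\lim_{t\to 0^+}t^{-\eps_0-2\tH_*(x)}\tn(t,x)$; positivity $\inf\tn_*>0$ then follows from positivity of $\nn|_{t_0}$, $L^\infty$-smallness of $R$, and $H^{s-1}\hookrightarrow C^0$ control on $\tH_*$. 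With the profiles in hand, I would build $h$ and $g$ via the formulas in items (2) and (3); that $(h,\log t)$ solves (\ref{ESForiginal_1})--(\ref{ESForiginal_2}) and that $g$ solves the Einstein vacuum equations in polarized $U(1)$-symmetry is guaranteed by the reduction of Section \ref{Section_Formulation} once the constraints hold. The extrinsic-curvature and Kretschmann-scalar formulas would then be obtained by substituting the asymptotic profiles of $\tH,\tSigma,\tn$ into the orthonormal-frame expressions, in particular into Lemma \ref{Ricci}, and collecting leading powers of $t$; from this, crushing and the $\mathfrak R$-blow-up of item (4) drop out, while $C^2$-inextendibility and past timelike geodesic incompleteness follow via Hawking's singularity theorem. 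AVTD is a direct reading of (\ref{evolution_e})--(\ref{evolution_Sigma}): every spatial-derivative term carries an extra factor $\nn e_C^\Lambda=t^{r_0/2-r_C/2-\eps_1}\te_C^\Lambda$, which, combined with the decay of $\PP\tWW$, is strictly subleading relative to the dominant $1/t$ temporal dynamics whenever (\ref{Choice_eps1}) is enforced.

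The most delicate step I expect is the sharpening of the $\tn$-profile: the Fuchsian theorem provides only the raw decay $\tn=O(t^{\eps_0})$, while (\ref{Asymbehavior_tn}) demands the spatially varying exponent $\eps_0+2\tH_*(x)$. Bootstrapping to the refined profile requires convergence of $\tH$ in a topology strong enough to exponentiate, together with tight control on the integral $\int_0^{t_0}R(\tau,\cdot)\,d\tau/\tau$ to preserve strict positivity of $\tn_*$. A secondary nuisance is the bookkeeping in the Kretschmann calculation: a large number of $t^{-4-8\tH_*}$-divergent contributions must cancel before $\mathfrak R$ emerges cleanly, after which uniform positivity $\inf\mathfrak R>0$ reduces to coercivity of $\sum_{i=0}^2 r_i^2$ plus a small $\tH_*, \tSigma_*$-perturbation, failing only in the flat case $p_3=1$, $r_i=0$.
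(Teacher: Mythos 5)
Your proposal is correct and follows essentially the same route as the paper: invoke Proposition~\ref{Prop_Fuchsian} for global existence, uniform bounds, and projected decay; propagate the constraints via Lemma~\ref{ConstraintPropagation}; integrate the scalar lapse ODE $t\,\partial_t\log\tn=\eps_0+2\tH$ against the convergence of $\tH$ to obtain the refined profile~(\ref{Asymbehavior_tn}); and substitute the asymptotic profiles into Lemma~\ref{Ricci} to extract the crushing, Kretschmann, and Ricci-squared blow-up laws, with AVTD read off from the time-weight $t^{r_0/2-r_C/2-\eps_1}$ on spatial derivatives. The one small step you omit is the paper's verification of the asymptotically pointwise Kasner property, which it carries out by showing $\lim_{t\to 0^+}\|2t\nn\,{}^{(2+1)}k_{AB}-\mathrm k_{AB}\|_{L^\infty}=0$ for $\mathrm k_{AB}:=2\tH_*\delta_{AB}+2\tSigma_{*AB}+r_{AB}$ together with the approximate Kasner relation for $\mathrm k$, and then invoking the criterion of Definition 1.1 in \cite{Todd2024}; this is still the same substitution argument you describe, applied to the extrinsic-curvature components rather than the curvature invariants.
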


\begin{proofofthm}{\ref{Main}} \ \ 
    
    \smallskip

    \underline{Asymptotic behaviors.} 

    \smallskip 

    \propref{Prop_Fuchsian} yields existence, uniqueness and asymptotic decompositions for the projected components: 
    $$\PP \tWW = \left( \tn, \te_I^\Lambda, \tA_A, \tU_A - \tA_A, 0, 0 \right), \quad \PP^\perp \tWW = \left( 0, 0, 0, 0, \tH, \tSigma_{AB} \right).$$ 
    In particular, we obtain (\ref{tH_*tSigma_*}),(\ref{max_tH_*tSigma_*}), (\ref{Asymbehavior_teAU}), (\ref{Asymbehavior_tHtSigma}) and 
    $$\tn = O(t^\zeta).$$ 
    Moreover, the uniform bound for $\tWW$ holds 
    \begin{equation}
        \Norm{\tWW(t)}_{H^k(\TT^2)} \lesssim \delta, \quad \text{for any $0 < t \le t_0$. } \label{NormtWW}
    \end{equation}

    From the evolution equation (\ref{evolution_nn'}), We see that 
    \begin{equation*}
        \pa_t \log \tn = t^{-1} \left( 2\tH_* + \eps_0 + O(t^\zeta) \right), 
    \end{equation*}
    and integrating it in time yields the refined profile (\ref{tn_*}) and (\ref{Asymbehavior_tn}). 

    \smallskip 

    \underline{Crushing singularities.} 

    \smallskip 

    The second fundamental form for reduced metric $h$ is 
    \begin{align}
        {^{(2+1)}k}_{AB} &= h(\nabla_{e_A} e_0, e_B) = \gamma_{AB0} = \HH \delta_{AB} + \Sigma_{AB}, \label{(2+1)Dim_k}
    \end{align}
    and using (\ref{Asymbehavior_tn}), (\ref{Asymbehavior_tHtSigma}), we obtain 
    \begin{align*}
        \tr \, ^{(2+1)}k 
        &= 2 t^{-1-2\tH_*-r_0/2} \tn_*^{-1} \left( \frac {r_0}{4} + \tH_* + O(t^\zeta) \right). 
    \end{align*}
    Combining (\ref{NormtWW}) and Sobolev embedding, we see that $\tr k$ blows up as $t \rightarrow 0^+$, hence $t=0$ is a crushing singularity. 
    
    The same conclusion holds for the physical metric $g$ by the analogous computation. 

    \smallskip 

    \underline{Asymptotically pointwise Kasner behavior.} 

    \smallskip 
    
    It can be shown, from (\ref{Kasner_exponents}), (\ref{(2+1)Dim_k}), that 
    \begin{gather*}
        \lim_{t \rightarrow 0^+} \Norm{2 t \nn ^{(2+1)}k_{AB} - \mathrm k_{AB}}_{L^\infty} = 0, \quad \text{with $\mathrm k_{AB} := 2 \tH_* \delta_{AB} + 2 \tSigma_{*AB} + r_{AB}$}, \\ 
        \left( \mathrm k_A^A \right)^2 - \mathrm k_A^B \mathrm k_B^A + 4 \mathrm k_A^A = 12 \tH_*^2 + (4r_0+16) \tH_* - 4 \tSigma_* \cdot \tSigma_* - 4 r^{AB} \tSigma_{*AB} = O(t^{\zeta}) 
    \end{gather*}
    The conclusion then follows from Definition 1.1 in \cite{Todd2024}.  

    \smallskip 

    \underline{Blow-ups for the curvature invariants.} 

    \smallskip 

    In view of \lemref{Ricci}, a lengthy but straightforward calculation shows 
    \begin{gather*}
        ^{(3+1)}R^{abcd} \, {^{(3+1)}R}_{abcd} = 8 t^{-2} \nn^{-2} \left( 6 \HH^2 - 2 U \cdot U + \Sigma \cdot \Sigma \right), 
    \end{gather*}
    and substituting the expansions (\ref{Asymbehavior_tn})-(\ref{Asymbehavior_teAU}) with the rescalings (\ref{change_tn})-(\ref{change_tSigma}) yields the stated asymptotic form of the Kretschmann scalar. The $(2+1)$-dimensional $^{(2+1)}R^{ab} \, {^{(2+1)}R}_{ab}$ is handled similarly. Moreover, the positivity statement $\inf_{x \in \mathbb T^3} \mathfrak R > 0$ follows from $\sum_{i=0}^2 r_i^2 > 0$ after choosing $\delta_0$ sufficiently small and applying Sobolev embeddings. 
\end{proofofthm}

\smallskip


\begin{appendix}

\section{The Fuchsian method} \label{Appendix_Fuchsian}

The classical Fuchsian system in the literature, see e.g. \cite{kichenassamy2007fuchsian}, is presented as a semi-linear hyperbolic PDE system on $\mathbf R^n$ of the form: 
\begin{equation} \label{FuchsianSystem}
    t\pa_t \uu + A(t,x) \uu = \FF[\uu], 
\end{equation}
where $\uu = \uu(x,t)$ is a vector-valued unknown, the coefficient matrices and the vector-valued nonlinear terms $\FF[\uu] = \FF(t,x,\uu,\uu_x)$ are all analytic, and $\FF$ decays in $t$ at some positive rate. 

Without loss of generality, we assume the structural simplifications 
\begin{equation} \label{FuchsianCondition}
    A = A(x) \text{ only depends on $x$, and } \FF = t\ff(t,x,\uu,\uu_x). 
\end{equation}
Indeed, if $\FF$ decays like $t^\alpha$, we can rescale time by $t'=t^\alpha$ to recover the form (\ref{FuchsianCondition}). 
Likewise, if $A$ also depends on $t$, we can write $A = A_0(x) + tA_1(x,t,\uu)$, and absorb the $tA_1$ part into the right-hand side; 
and if $\uu$ possesses further decay, we may set $\uu'= t^\beta \uu$ to shift the spectrum of $A$ at the expense of changing the decay of the source. 

As shown in \cite{kichenassamy2007fuchsian}, the singular initial value problem for analytic data is locally well-posed: 
\begin{theorem} \label{Kiche_wellposedness}
    The system (\ref{FuchsianSystem}) subject to the conditions (\ref{FuchsianCondition}) is locally well-posed: there exists a unique analytic solution near $t=0$ which tends to zero as $t\rightarrow 0$. 
\end{theorem}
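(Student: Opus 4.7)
The plan is to construct the unique analytic solution by combining a formal power-series ansatz with a Cauchy-Kovalevskaya type convergence argument. Since any admissible solution is required to be analytic near $t=0$ with $\uu \to 0$, I would substitute the ansatz $\uu(t,x) = \sum_{k \ge 1} t^k \uu_k(x)$ into (\ref{FuchsianSystem}) and expand the analytic nonlinearity $t\ff(t,x,\uu,\uu_x)$ as a power series in $(t,\uu,\uu_x)$. Matching coefficients of $t^k$ yields the recursion
\begin{equation*}
    (kI + A(x))\, \uu_k(x) = \Phi_k\bigl(x;\, \uu_1,\dots,\uu_{k-1},\, \pa_x \uu_1,\dots,\pa_x \uu_{k-1}\bigr),
\end{equation*}
where $\Phi_k$ is a universal polynomial determined by the Taylor coefficients of $\ff$. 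Under the implicit non-resonance hypothesis that $\operatorname{spec} A(x)$ avoids the negative integers, so that $kI+A(x)$ is invertible uniformly in $x$ on a common polydisc of analyticity, the recursion solves uniquely at each order and each $\uu_k$ is analytic on that polydisc, because analyticity is preserved under multiplication, differentiation, and inversion of holomorphic matrix-valued functions.

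Next, I would prove convergence of the formal series $\sum_{k\ge 1} t^k \uu_k(x)$. A naive majorant fails because $\ff$ depends on the spatial derivatives $\uu_x$, and each step of the recursion costs one derivative. The remedy is to work in the Ovsyannikov-Nirenberg scale of Banach spaces of holomorphic functions on shrinking polydiscs of radii $\rho_k \downarrow \rho_\infty > 0$, absorbing the derivative loss at each iteration through the Cauchy estimate $\|\pa_x v\|_{\rho'} \le C(\rho-\rho')^{-1}\|v\|_{\rho}$. The crucial gain is the factor $1/k$ coming from $(kI+A)^{-1}$: it combines with the geometric bookkeeping $(\rho_k - \rho_{k+1})^{-1}$ and the factorial growth of multinomial coefficients in $\Phi_k$ so that, for an appropriate choice of $\rho_k$, the resulting majorant series has a positive radius of convergence in $t$. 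This delivers a genuine analytic solution on $\{|t| < t_*\} \times U$ which vanishes as $t \to 0$.

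For uniqueness, any two analytic solutions tending to zero at $t=0$ admit convergent Taylor expansions in $t$ whose coefficients must satisfy the same recursion displayed above; the non-resonance hypothesis forces the coefficients to agree order by order, and the identity principle for analytic functions concludes that the solutions coincide on the common domain of analyticity.

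The main obstacle is the convergence step: the abstract Cauchy-Kovalevskaya bookkeeping is delicate, requiring a careful balance between the $1/k$ gain from $(kI+A)^{-1}$, the derivative loss in $\ff$, and the shrinkage of the polydisc radius at each iteration. This is precisely where the analyticity hypothesis is indispensable, and it is the restrictive feature that the refined Sobolev-regular Fuchsian framework actually used in this paper (see \propref{Prop_Fuchsian} and \thmref{FuchsianTheorem}) replaces by structural conditions on $A$ such as symmetrizability and spectral sign constraints.
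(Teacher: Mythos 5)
The paper does not prove this theorem at all: it is stated as a known result and attributed to the classical Fuchsian literature, specifically \cite{kichenassamy2007fuchsian}. So there is no internal argument to compare your sketch against; the fair comparison is with the cited literature, and on that score your sketch correctly reproduces the standard two-stage strategy for singular initial value problems in the analytic category. Stage one, the formal ansatz $\uu = \sum_{k\ge 1} t^k \uu_k$ leading to the recursion $(kI + A(x))\uu_k = \Phi_k(\uu_1,\dots,\uu_{k-1},\pa_x\uu_1,\dots)$, is exactly right, and you are correct to flag that solvability of this recursion rests on a non-resonance hypothesis ($\operatorname{spec} A(x)$ disjoint from $\{-1,-2,\dots\}$, with a uniform resolvent bound giving $\|(kI+A)^{-1}\| \lesssim 1/k$) that is genuinely necessary but left implicit in the theorem statement — if $A$ had a negative-integer eigenvalue, $t^{m}$-homogeneous modes would ruin both existence at order $m$ and uniqueness of the decaying solution. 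Stage two, convergence via an abstract Cauchy-Kovalevskaya argument in a scale of Banach spaces of holomorphic functions on shrinking polydiscs, is also the right mechanism, and you correctly identify the essential tension: each application of $\Phi_k$ costs one spatial derivative (hence, via Cauchy estimates, a factor $(\rho-\rho')^{-1}$ and eventually factorial growth), while the $1/k$ from $(kI+A)^{-1}$ supplies the compensating $1/k!$ over $k$ steps. Your uniqueness argument via matching Taylor coefficients and the identity principle is standard and correct.

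The one place your write-up is genuinely thin is the convergence bookkeeping itself, and you acknowledge as much. As written, the claim that ``for an appropriate choice of $\rho_k$, the resulting majorant series has a positive radius of convergence in $t$'' is asserted rather than demonstrated. In the Nirenberg--Ovsyannikov formulation this is cleanest if one abandons the term-by-term majorant picture and instead reformulates the problem as a contraction for the map $\uu \mapsto \int_0^t (\tau/t)^{A} \ff(\tau,\cdot,\uu,\uu_x)\,\mathrm d\tau$ on a scale $\{X_\rho\}_{\rho \le \rho_0}$ with the linearly shrinking radius $\rho(t) = \rho_0 - c|t|$; the integral against the Fuchsian fundamental solution $(\tau/t)^A$ absorbs the $1/k$ gain automatically, and the derivative loss is controlled by the standard estimate $\|\pa_x v\|_{\rho'} \le (\rho-\rho')^{-1}\|v\|_\rho$. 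This is in fact how Kichenassamy organizes the proof in the cited reference, so your formal-series + majorant route is a mild reorganization of, rather than a departure from, the argument the paper points to. Either route is correct; the fixed-point version is less prone to bookkeeping errors because it avoids tracking multinomial coefficients explicitly.
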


By relaxing analyticity, we arrive at a more general Fuchsian framework. A version suitable for our purposes (see Section \ref{Section_Fuchsian_system}), and following Beyer-Oliynyk-Olvera-Santamar{\'\i}a \cite{beyer2021fuchsian}, takes the form 
\begin{equation} \label{General_Fuchsiansystem}
    \A^0 \pa_t \uu + \A^I \nabla_I \uu = \frac {1}{t} \mathcal A \PP \uu + \frac 1t \FF[t, \uu]. 
\end{equation}
Here, the spatial derivative term $\A^I \nabla_I \uu$ is decoupled from the nonlinear source $\FF$. Moreover, since we are working on a manifold, all objects should be understood as sections and bundle maps over the vector bundles. But for brevity, we continue to use function notations below. 

\smallskip 

We state a streamlined Fuchsian theorem tailored to the setting of Section \ref{Section_Fuchsian_system}. 

\begin{theorem} \label{FuchsianTheorem}
    Let $M = \mathbf R \times \Sigma$ be an $(n+1)$-dimensional product manifold with closed spatial slice $\Sigma$. 
    Fix an integer $s > n/2+3$ and a constant $\sigma > 0$. 
    Consider the PDE system (\ref{General_Fuchsiansystem}) with initial data $$\uu(t_0, \cdot) = \uu_0 \in H^s(\Sigma). $$
    Then there exists $\delta>0$ such that if 
    \begin{equation*}
        \|\uu_0\|_{H^s(\Sigma)} \le \delta,
    \end{equation*} 
    then the system (\ref{General_Fuchsiansystem}) admits a unique classical solution
    $$\uu \in C_b^0([T_0,0),H^s(\Sigma)) \cap C^1([T_0,0), H^{s-1}(\Sigma)),$$ 
    such that the limit $\PP^{\perp} \uu(0) := \lim_{t \rightarrow 0} \PP^\perp \uu(t)$ exists in $H^{s-1}(\Sigma)$. Moreover, the solution $\uu$ satisfies the following \emph{energy estimates}: 
    \begin{equation}
        \Norm{\uu(t)}_{H^s(\Sigma)} - \int_{T_0}^t \frac {1}{\tau} \Norm{\PP \uu(\tau)}_{H^s(\Sigma)}^2 \dd \tau \le C(\delta, \delta^{-1}) \Norm{\uu_0}_{H^s(\Sigma)}^2, 
    \end{equation}
    and \emph{decay estimates}: 
    \begin{subequations}
        \begin{gather}
            \Norm{\PP \uu(t)}_{H^{s-1}} \lesssim |t|^{\zeta - \sigma}, \\ 
            \Norm{\PP^\perp \uu(t) - \PP^\perp \uu(0)}_{H^{s-1}} \lesssim |t|^{\zeta - \sigma}, 
        \end{gather}
    \end{subequations} 
    where $\zeta := \kappa - \gamma_1 s (s-1) \mathtt A/2 > 0$ (the involving constants are given below). 
    These conclusions hold provided that the following \underline{Fuchsian conditions} are satisfied: 

\begin{subequations}
    \begin{itemize}
        \item[(1)] The matrix $\PP$ is a time-independent, covariantly constant and symmetric projection: 
        \begin{equation} \label{estimatePP}
            \PP^2 = \PP, \quad \PP^T = \PP, \quad \pa_t \PP = 0, \quad \nabla \PP = 0, 
        \end{equation}
        thus we can define $\PP^\perp := \text{id} - \PP$, satisfying the same above conditions and $\PP \PP^\perp = \PP^\perp \PP = 0$; 
        \item[(2)] The coefficient matrices $\A^0$ and $\mathcal A$ obey 
        \begin{gather}
            (\A^0)^T = \A^0, \quad \pa_t \A^0 = 0, \quad \nabla \A^0 = 0, \label{conditionA} \\ 
            \quad \pa_t \mathcal A = 0, \quad \nabla \mathcal A = 0, \label{conditionAcal} \\ 
            \PP \A^0 \PP^\perp = \PP^\perp \A^0 \PP = 0, \quad [\PP, \mathcal A] = 0, \label{conditionPA}
        \end{gather}
        and there exist positive constants $\kappa, \gamma_1, \gamma_2$ with 
        \begin{equation}
            \frac {1}{\gamma_1} \id \le \A^0 \le \frac {1}{\kappa} \mathcal A \le \gamma_2 \id; \label{A0_and_AAA}
        \end{equation}
        \item[(3)] The nonlinear term $\FF$ satisfies the bounds 
        \begin{equation}
            \PP \FF = O\left(\frac {\lambda_1}{R} \uu \otimes \PP \uu\right), \quad \PP^\perp \FF = O\left( \frac{\lambda_2}{R} \PP \uu \otimes \PP \uu\right), \label{estimateFF}
        \end{equation}
        for some constants $\lambda_1, \lambda_2 > 0$; 
        \item[(4)] The spatial flux map $\A := (\A^I)_{1 \le I \le n}$ is symmetric in the sense that  
        \begin{equation}
            [\sigma \A]^T = \sigma \A 
        \end{equation}
        for any spatial $1$-form $\sigma$, and satisfies the estimates 
        \begin{gather}
            \PP \A \PP = |t|^{-1} O(\uu), \quad \PP^\perp \A \PP^\perp = |t|^{-1} O(\PP \uu \otimes \PP \uu), \\ 
            \PP \A \PP^\perp = |t|^{-1} O(\PP \uu), \quad \PP^\perp \A \PP = |t|^{-1} O(\PP \uu); \label{estimateA}
        \end{gather} 
        \item[(5)] The divergence of $\A$ $\Div \A := \nabla_I \left( \A^I(t,x,\uu) \right)$ satisfies 
        \begin{gather}
            \Div \A = |t|^{-1+p} O(\theta), \label{estimateDivA}
        \end{gather}
        for some constants $0 < p \le 1$ and $\theta > 0$. 
    \end{itemize}    
\end{subequations}
    Finally, all $O$-symbols above, say $f = O(g)$, indicate the presence of coefficient functions $h$ such that $f = hg$ and 
    \begin{equation*}
        \Norm{\nabla^k_x h}_{L^\infty} < 1, \text{\quad for all \ $k \ge 0$}; 
    \end{equation*}
    moreover, all parameters appearing above satisfy 
    \begin{gather}
        2\kappa > \max \left\{ 2 \gamma_1(\lambda_1+\lambda_2), s(s+1) \mathtt A \right\}, \quad \mathtt A := \sup_{t_0 \le t < 0} \left( |t| \, \Norm{\nabla \PP \A \PP}_{L^\infty} \right) < \infty. \label{assumptions_kappa}
    \end{gather}
\end{theorem}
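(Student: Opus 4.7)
The plan is to follow the Fuchsian energy method of Beyer-Oliynyk-Olvera-Santamar\'ia \cite{beyer2021fuchsian, oliynyk2021fuchsian}, streamlined to the setup here. Local existence on every pre-singular slab $[T_0, T]$ with $T < 0$ is standard, since condition (4) furnishes a symmetric hyperbolic structure and the coefficients are smooth in $\uu$ (see e.g. \cite{benzoni2006multi}). The nontrivial content is to upgrade this to a uniform-in-$T$ estimate that permits the singular limit $T \to 0$.

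Define the weighted Sobolev energy
\begin{equation*}
    E_s(t) := \sum_{|\alpha|\le s} \int_\Sigma \langle \nabla^\alpha \uu(t), \A^0 \nabla^\alpha \uu(t)\rangle.
\end{equation*}
Differentiating $E_s$ in $t$, substituting the equation, and integrating by parts in the flux term using the symmetry $[\sigma \A]^T = \sigma \A$ produces an identity of the schematic form
\begin{equation*}
    \frac{d E_s}{d t} = \frac{2}{t}\sum_{|\alpha|\le s}\int \langle \nabla^\alpha \uu, \mathcal A \PP \nabla^\alpha \uu\rangle + \frac{2}{t}\sum_{|\alpha|\le s}\int \langle \nabla^\alpha \uu, \nabla^\alpha \FF\rangle + (\text{commutators}) + (\Div \A \text{ terms}).
\end{equation*}
The first term is the crucial \emph{damping}: $[\PP, \mathcal A] = 0$ together with $\kappa \A^0 \le \mathcal A$ and $t < 0$ convert each summand into $-\frac{2\kappa}{|t|}\|\PP \nabla^\alpha \uu\|_{\A^0}^2$. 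The nonlinear source is harmless thanks to the structural hypothesis (3): every component of $\FF$ carries an explicit $\PP \uu$ factor, so its contribution is bounded by $|t|^{-1}(\lambda_1 + \lambda_2) \|\uu\|_{L^\infty} \|\PP \uu\|_{H^s}^2$ up to lower-order Moser estimates, which is absorbable into the damping under smallness. The divergence contribution is subcritical by condition (5) thanks to the gain $|t|^{-1+p}$.

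The high-order commutators $[\nabla^\alpha, \A^I]\nabla_I \uu$ generate the threshold constant $s(s+1)\mathtt A/2$ appearing in (\ref{assumptions_kappa}): the asymmetric flux bounds in (4) — in particular, the vanishing of the $\PP^\perp \A \PP^\perp$ block to quadratic order in $\PP \uu$ and the $\PP \uu$ filtering of the mixed blocks $\PP \A \PP^\perp$ and $\PP^\perp \A \PP$ — ensure that every top-order commutator residual is itself quadratic in $\PP \uu$, with total coefficient at most $\tfrac{1}{2}s(s+1)\mathtt A/|t|$. Collecting all contributions yields the differential inequality
\begin{equation*}
    \frac{d E_s}{d t} - \frac{2\kappa - s(s+1)\mathtt A - O(\delta)}{|t|} \|\PP \uu\|_{H^s}^2 \ge -C|t|^{-1+p} E_s,
\end{equation*}
and integrating forward from $T_0$ delivers both the uniform bound on $E_s$ and the integral control $\int \|\PP \uu\|_{H^s}^2 / |\tau| \, d\tau < \infty$. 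A standard continuation argument then extends the solution to the maximal interval $[T_0, 0)$.

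The pointwise decay $\|\PP \uu(t)\|_{H^{s-1}} \lesssim |t|^{\zeta-\sigma}$ is extracted by interpolating the integral bound with the equation for $\PP \uu$ (using $\kappa \A^0 \le \mathcal A$ to gain the factor $\zeta = \kappa - \gamma_1 s(s-1)\mathtt A/2 > 0$). Convergence of $\PP^\perp \uu$ as $t \to 0$ follows because the equation gives $\pa_t (\PP^\perp \uu) = |t|^{-1} O(\PP \uu \otimes \PP \uu) + (\text{integrable terms})$, whose right-hand side is $L^1$ near $t=0$ by Cauchy-Schwarz and the integral bound, making $\PP^\perp \uu$ Cauchy in $H^{s-1}$. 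The main obstacle throughout is the tight top-order bookkeeping that preserves the strict inequality $2\kappa > s(s+1)\mathtt A$: without the $\PP \uu$-filtering built into conditions (3) and (4), top-order commutators would couple $\PP^\perp \uu$ to itself with an unbounded $1/|t|$ coefficient, breaking the damping mechanism at the level where it must act.
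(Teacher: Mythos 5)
Your proposal is correct and follows essentially the same route as the paper's proof: an $\A^0$-weighted $H^s$ energy estimate in which the singular $\frac1t\mathcal A\PP$ term is turned into damping on $\PP\uu$ via $\kappa\A^0\le\mathcal A$ and $[\PP,\mathcal A]=0$, commutator bookkeeping giving the $s(s+1)\mathtt A$ threshold in (\ref{assumptions_kappa}), a bootstrap/Gronwall to reach $[T_0,0)$, then projecting onto $\PP$ (Gronwall on the projected equation, with a one-derivative loss explaining why the decay lands in $H^{s-1}$) and onto $\PP^\perp$ (integrability of $\partial_t\PP^\perp\uu$) for the asymptotics. The one place your sketch is loose is the derivation of the pointwise decay of $\PP\uu$: the paper does not interpolate but runs a second, separate Gronwall argument on the $\PP$-projected equation at orders $0\le k\le s-1$, and the lost derivative comes precisely from the $\PP\A\PP^\perp\nabla_I\nabla^k\PP^\perp\uu$ coupling, which you should make explicit; otherwise the structure matches.
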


\smallskip 

\begin{proofofthm}{\ref{FuchsianTheorem}}
    The proof follows the strategy of Theorem~3.8 in \cite{beyer2021fuchsian}. We give a condensed presentation adapted to our simplified hypotheses. 
    
    We organize the argument in the following five steps: (i) derive $L^2$ and $H^s$ estimates for $\uu$, (ii) prove global existence on $[t_0,0)$, (iii) show convergence of $\PP^\perp \uu(t,\cdot)$ as $t \rightarrow 0^-$, (iv) obtain decay estimates for $\PP \uu(t, \cdot)$, and (v) establish decay for $\PP^\perp \uu(t, \cdot) - \PP^\perp \uu (0, \cdot)$. 
    Moreover, due to (\ref{A0_and_AAA}), we will use the fact that $\A^0$ is uniformly equivalent to the identity, so that 
    $$\angbracket{\vv, \A^0 \vv} \approx \Norm{\vv}_{L^2}$$ 
    for any function $\vv$. Here and henceforth, the angle bracket $\angbracket{\cdot, \cdot}$ denotes the usual inner product of the Hilbert space $L^2(\Sigma)$. 
    
    \smallskip 
    
    \underline{$L^2$ and $H^s$ energy estimates.} 

    \smallskip 

    Pair (\ref{General_Fuchsiansystem}) with $\uu$ and integrate to obtain 
    \begin{equation*}
        \pa_t \angbracket{\uu, \A^0 \uu} = \angbracket{\uu, \Div \A \uu} + \frac 2t \angbracket {\uu, \AAA \PP \uu} + \frac 2t \angbracket{\uu, \FF}. 
    \end{equation*}
    Using (\ref{A0_and_AAA}), (\ref{estimateFF}) and (\ref{estimateDivA}), and the covariant invariance of $\A^0$, $\AAA$ and $\PP$, we have 
    \begin{align*}
        \frac {1}{\gamma_1} \pa_t \Norm{\uu}_{L^2}^2 &\le \norm{\angbracket{\uu, \Div \A \uu}} - \frac {2}{|t|} \angbracket{\uu, \AAA \PP \uu} + \frac {2}{|t|} \norm{\angbracket{\uu, \FF}} \\ 
        &\le \theta |t|^{-1+p} \Norm{\uu}_{L^2}^2 - \frac {2 \kappa}{\gamma_1 |t|} \Norm{\PP \uu}_{L^2}^2 + \frac {2(\lambda_1 + \lambda_2)}{|t|} \Norm{\PP \uu}_{L^2}^2. 
    \end{align*}
    Rearranging this inequality gives the $L^2$ estimate of $\uu$:  
    \begin{equation} \label{L2_estimate}
        \pa_t \Norm{\uu}_{L^2}^2 \le \theta \gamma_1 |t|^{-1+p} \Norm{\uu}_{L^2}^2 - |t|^{-1} \rho_0 \Norm{\PP \uu}_{L^2}^2, 
    \end{equation}
    where, by the assumption (\ref{assumptions_kappa}), the constant $\rho_0$ is positive 
    $$\rho_0 := 2\kappa - 2\gamma_1(\lambda_1+\lambda_2) > 0.$$

    Differentiating (\ref{General_Fuchsiansystem}) $k$-times in space, for $1 \le k \le s$, we obtain that 
    \begin{equation*} 
        \A^0 \pa_t \nabla^k \uu + \A^I \nabla_I \nabla^k \uu = \frac 1t \AAA \nabla^k \PP \uu + \frac 1t \nabla^k \FF[t,\uu] - [\nabla^k, \A^I] \nabla_I \uu - \A^I [\nabla^k, \nabla_I] \uu, 
    \end{equation*}
    which, paired with $\nabla^k \uu$, yields that 
    \begin{equation} \label{sderivative}
        \pa_t \angbracket{\nabla^k \uu, \A^0 \nabla^k \uu} = \angbracket{\nabla^k \uu, \Div \A \nabla^k \uu} + \frac 2t \angbracket{\nabla^k \uu, \AAA \nabla^k \PP \uu} + 2 \angbracket{\nabla^k \uu, \GG_k}, 
    \end{equation}
    where $\GG_k$ collects 
    $$\GG_k := \frac 1t \nabla^k \FF[t,\uu] - [\nabla^k, \A^I] \nabla_I \uu - \A^I [\nabla^k, \nabla_I] \uu.$$

    The differentiated source term $\nabla^k \FF$ can be estimated by 
    \begin{align*}
        \norm{\angbracket{\nabla^k \uu, \nabla^k \FF[t, \uu]}} &\le \norm{\angbracket{\nabla^k \PP^\perp \uu, \nabla^k \PP^\perp \FF[t,\uu]}} + \norm{\angbracket{\nabla^k \PP \uu, \nabla^k \PP \FF[t,\uu]}} \\ 
        &\le \frac {\lambda_1 + \lambda_2}{R}  \Norm{\uu}_{H^s} \Norm{\PP \uu}_{H^s}^2, 
    \end{align*}
    with structure (\ref{estimateFF}) applied in the second inequality. 
    The commutator term $[\nabla^k, \nabla_I] \uu$ can be decomposed by: 
    \begin{align*}
        \angbracket{\nabla^k \uu, \A^I [\nabla^k, \nabla_I] \uu} &= \angbracket{\nabla^k \PP \uu, \PP \A^I \PP [\nabla^k, \nabla_I] \PP \uu} + \angbracket{\nabla^k \PP \uu, \PP \A^I \PP^\perp [\nabla^k, \nabla_I] \PP^\perp \uu} \\ 
        &\quad + \angbracket{\nabla^k \PP^\perp \uu, \PP^\perp \A^I \PP [\nabla^k, \nabla_I] \PP \uu} + \angbracket{\nabla^k \PP^\perp \uu, \PP^\perp \A^I \PP^\perp [\nabla^k, \nabla_I] \PP^\perp \uu}. 
    \end{align*}
    Applying (\ref{estimateA}) and since $[\nabla^k, \nabla_I]$ is a differential operator of order $k-1$, together with our choice $s>n/2+3$, we obtain the bound 
    \begin{align*}
        \norm{\angbracket{\nabla^k \uu, \A^I [\nabla^k, \nabla_I] \uu}} &\le \Norm{\nabla^k \PP \uu}_{L^2} \left( |t|^{-1} \Norm{\PP \uu}_{K^{k-1}} \Norm{\uu}_{H^s} \right) + \Norm{\nabla^k \PP \uu}_{L^2} \left( |t|^{-1} \Norm{\uu}_{H^{k-1}} \Norm{\PP \uu}_{H^s} \right) \\ 
        &\quad + \Norm{\nabla^k \uu}_{L^2} \left( |t|^{-1} \Norm{\PP \uu}_{H^{k-1}} \Norm{\PP \uu}_{H^s} \right) + \Norm{\nabla^k \uu}_{L^2} \left( |t|^{-1} \Norm{\PP \uu}_{H^s}^2 \Norm{\uu}_{H^{k-1}} \right) \\ 
        &\le |t|^{-1} C(\Norm{\uu}_{H^s}) \Norm{\PP \uu}_{H^s}^2, 
    \end{align*}
    where, here and thereafter, $C (\eta)$ denotes a coefficient polynomially depending on $\eta > 0$ \footnote{and, possibly on $s$ and the fixed parameters in Fuchsian conditions.}, and tending to $0$ with positive rate as $\eta \rightarrow 0^+$. 
    The other commutator term $[\nabla^k, \A^I] \nabla_I \uu$ in $\GG_k$ can be similarly decomposed as 
    \begin{align*}
        \angbracket{\nabla^k \uu, [\nabla^k, \A^I] \nabla_I \uu} &= \angbracket{\nabla^k \uu, \sum_{j=1}^k \nabla^j \A^I * \nabla^{k-j} \nabla_I \uu} \\ 
        &= \angbracket{\nabla^k \PP \uu, \sum_{j=1}^k \nabla^j (\PP \A^I \PP) * \nabla^{k-j} \nabla_I \PP \uu} \\ 
        &\quad + \angbracket{\nabla^k \PP \uu, \sum_{j=1}^k \nabla^j (\PP \A^I \PP^\perp) * \nabla^{k-j} \nabla_I \PP^\perp \uu} \\ 
        &\quad + \angbracket{\nabla^k \PP^\perp \uu, \sum_{j=1}^k \nabla^j (\PP^\perp \A^I \PP) * \nabla^{k-j} \nabla_I \PP \uu} \\ 
        &\quad + \angbracket{\nabla^k \PP^\perp \uu, \sum_{j=1}^k \nabla^j (\PP^\perp \A^I \PP^\perp) * \nabla^{k-j} \nabla_I \PP^\perp \uu}. 
    \end{align*}
    With similar arguments and (\ref{estimateA}) again, we find that 
    \begin{align*}
        \norm{\angbracket{\nabla^k \uu, [\nabla^k, \A^I] \nabla_I \uu}} &\le 
        k \Norm{\nabla \PP \A \PP}_{L^\infty} \Norm{\PP \uu}_{H^s}^2 + |t|^{-1} C(\Norm{\uu}_{H^s}) \Norm{\PP \uu}_{H^s}^2, 
    \end{align*}
    where the first term and second term come from $l=1$ and $l\ge2$ respectively. 

    Hence the estimates for the $\GG_k$-terms are given by 
    \begin{align*}
        \norm{\angbracket{\nabla^k \uu, \GG_k}} &\le \norm{\angbracket{\nabla^k \uu, \A^I [\nabla^k, \nabla_I] \uu}} + \angbracket{\nabla^k \uu, [\nabla^k, \A^I] \nabla_I \uu} + |t|^{-1} \norm{\angbracket{\nabla^k \uu, \nabla^k \FF[t, \uu]}} \\ 
        &\le |t|^{-1} k \mathtt A \Norm{\PP \uu}_{H^s}^2 + |t|^{-1} C \left( \Norm{\uu}_{H^s} \right) \Norm{\PP \uu}_{H^s}^2. 
    \end{align*}
    Combining this with (\ref{sderivative}), we conclude the $L^2$ estimate for $\nabla^k \uu$: 
    \begin{align*}
        \pa_t \Norm{\nabla^k \uu}_{L^2}^2 &\le - |t|^{-1} 2\kappa \Norm{\nabla^k \PP \uu}_{L^2}^2 + \gamma_1 \theta |t|^{-1+p} \Norm{\nabla^k \uu}_{L^2}^2 \\ 
        &\quad + |t|^{-1} 2k \gamma_1 \mathtt A \Norm{\PP \uu}_{H^s}^2 + |t|^{-1} C \left(\Norm{\uu}_{H^s}\right) \Norm{\PP \uu}_{H^s}^2, 
    \end{align*}
    with $1 \le k \le s$. Summing over $k$, with help of (\ref{L2_estimate}), we thus arrive at the $H^s$ estimate for $\uu$: 
    \begin{equation} \label{Hs_estimate}
        \pa_t \Norm{\uu}_{H^s}^2 \le - |t|^{-1} \rho_s \Norm{\PP \uu}_{H^s}^2 + \theta \gamma_1 |t|^{-1+p} \Norm{\uu}_{H^s}^2 + |t|^{-1} 2 \gamma_1 (\lambda_1+\lambda_2) \Norm{\PP \uu}_{L^2}^2, 
    \end{equation}
    with $\rho_s$ defined by 
    \begin{gather*}
        \rho_s := 2\kappa - \gamma_1 s(s+1) \mathtt A - C \left( \Norm{\uu}_{H^s} \right).
    \end{gather*}

    \smallskip 

    \underline{Global existence on $[t_0,0)$.} 

    \smallskip 

    Adding $\rho_0^{-1} 2 \gamma_1(\lambda_1+\lambda_2)$ times (\ref{L2_estimate}) to (\ref{Hs_estimate}), we can remove the appearing $\Norm{\uu}_{L^2}$ term, which yields  
    \begin{equation}
        \pa_t \left( \Norm{\uu}_{H^s}^2 + \rho_0^{-1} 2 \gamma_1 (\lambda_1+\lambda_2) \Norm{\uu}_{L^2}^2 \right) \le - |t|^{-1} \rho_s \Norm{\PP \uu}_{H^s}^2 + C_0 |t|^{-1+p} \Norm{\uu}_{H^s}^2, 
    \end{equation}
    where, here and thereafter, $C_0$ denotes a positive constant depending only on $s$ and the fixed parameters in the Fuchsian conditions. 
    
    Define the energy function 
    $$E_s(t) := \Norm{\uu}_{H^s}^2 + 2 \rho_0^{-1} \gamma_1 (\lambda_1+\lambda_2) \Norm{\uu}_{L^2}^2 - \int_{t_0}^t \frac {\rho_s}{\tau} \Norm{\PP \uu(\tau)}_{H^s}^2 \dd \tau,$$ 
    then it satisfies 
    $$\pa_t E_s(t) \le C_0 |t|^{-1+p} E_s(t).$$
    Gronwall's inequality then produces that 
    \begin{equation} \label{enerygyinequality}
        E_s(t) \le e^{C_0(|t_0|^p - |t|^p)} E_s(t_0). 
    \end{equation}

    By choosing $\delta_0$ sufficiently small, one ensures that 
    $$C \left( \Norm{\uu}_{H^s} \right) < \kappa - \frac 12 \gamma_1 s(s+1) \mathtt A \quad \text{for all \ $0 < \Norm{\uu}_{H^s} \le 2\delta_0$},$$
    then our assumption (\ref{assumptions_kappa}) gives uniformly that 
    $$\rho_0 > 0, \quad \rho_s > 0.$$
    Since initially we have 
    \begin{equation*}
        \Norm{\uu(t_0)}_{H^s} \le \delta, 
    \end{equation*} 
    refining $\delta < \delta_0$, and defining 
    $$t_{\delta_0} := \sup \left\{ t \in (t_0, 0]: \Norm{\uu(t)}_{H^s} \le 2 \delta_0 \right\}$$ 
    to be the first time so that $\Norm{\uu(t_{\delta_0})}_{H^s} = 2\delta_0$, the bootstrap argument provides that 
    \begin{align*}
        \Norm{\uu(t)}_{H^s}^2 &\le e^{C_0 |t_0|^p} \left[ \Norm{\uu(t_0)}_{H^s}^2 + 2 \rho_0^{-1} \gamma_1 (\lambda_1+\lambda_2) \Norm{\uu(t_0)}_{L^2}^2 \right] \\ 
        &\le e^{C_0 |t_0|^p} [1 + 2 \rho_0^{-1} \gamma_1 (\lambda_1+\lambda_2)] \delta^2.  
    \end{align*}
    for all $t \in [t_0, t_{\delta_0})$, by (\ref{enerygyinequality}). Now let 
    $$\delta < e^{- C_0 |t_0|^p/2} [1+\rho_0^{-1}\gamma_1(\lambda_1+\lambda_2)]^{-1/2} \delta_0$$ 
    be small enough, 
    it then follows that $\Norm{\uu(t)}_{H^s} < \delta_0$, contradicting the choice of $t_{\delta_0}$; and hence $t_{\delta_0} = 0$, and the solution exists globally on $[t_0, 0) \times \Sigma$ with uniform bounds 
    \begin{equation*} 
        \Norm{\uu(t)}_{H^s} \le e^{C_0 (1+|t_0|^p)} \delta. 
    \end{equation*}
    Furthermore, we have the following energy estimate for $\uu$: 
    \begin{equation} \label{estimate_uu}
        \Norm{\uu(t)}_{H^s}^2 - \int_{t_0}^t \frac {1}{\tau} \Norm{\PP \uu(\tau)}_{H^s}^2 \dd \tau \le e^{C_0(1+|t_0|^p-|t|^p)} \Norm{\uu(t_0)}_{H^s}^2. 
    \end{equation}

    \smallskip 

    \underline{Convergence of $\PP^\perp \uu$.} 

    \smallskip 

    Projecting (\ref{General_Fuchsiansystem}) onto $\PP^\perp$ direction gives 
    \begin{equation} \label{PPperpProjection}
        \pa_t \PP^\perp \uu = - \PP^\perp (\A^0)^{-1} \A^I \nabla_I \uu + \frac 1t \, \PP^\perp (\A^0)^{-1} \left( \AAA \PP \uu + \FF[t,\uu] \right), 
    \end{equation}
    where the right-hand side can be decomposed further, by using (\ref{estimatePP}), (\ref{conditionPA}), as 
    \begin{equation*} 
        \pa_t \PP^\perp \uu = - \PP^\perp (\A^0)^{-1} \PP^\perp \A^I \PP \nabla_I \PP \uu - \PP^\perp (\A^0)^{-1} \PP^\perp \A^I \PP^\perp \nabla_I \PP^\perp \uu + \frac 1t \, \PP^\perp (\A^0)^{-1} \PP^\perp \FF[t,\uu].  
    \end{equation*} 
    Integrating in time over $[t_1,t_2] \subset [t_0, 0)$ and applying the $H^{s-1}$ norm, we estimate that 
    \begin{equation} \label{PPperp_uu}
        \begin{aligned} 
            \Norm{\PP^\perp \uu(t_2) - \PP^\perp \uu(t_1)}_{H^{s-1}} &\le \gamma_1 \int_{t_1}^{t_2} \Norm{\PP^\perp \A \PP \nabla \PP \uu}_{H^{s-1}} + \Norm{\PP^\perp \A \PP^\perp \nabla \PP^\perp \uu}_{H^{s-1}} + \Norm{\PP^\perp \FF}_{H^{s-1}} \dd t \\ 
            &\le \gamma_1 \int_{t_1}^{t_2} |t|^{-1} \left( 2 + \Norm{\uu(t)}_{H^{s-1}} \right) \Norm{\PP \uu(t)}_{H^s} \Norm{\PP \uu(t)}_{H^{s-1}} \dd t, 
        \end{aligned}
    \end{equation}
    where (\ref{estimateFF}), (\ref{estimateA}) have been used in the second inequality. 

    The energy bound (\ref{estimate_uu}) implies 
    \begin{align*}
        \Norm{\PP^\perp \uu(t_2) - \PP^\perp \uu(t_1)}_{H^{s-1}} &\le C_0 \left( 1 + e^{|t_0|^p} \delta \right) \int_{t_1}^{t_2} - \frac 1t \Norm{\PP \uu}_{H^s}^2 \dd t, 
    \end{align*}
    and hence tends to $0$ as $t_1, t_2 \rightarrow 0^-$. Therefore, $\PP^\perp \uu$ converges in $H^{s-1}$ with limit denoted by 
    $$\PP^\perp \uu(0) := \lim_{t \rightarrow 0^-} \PP^\perp \uu(t) \in H^{s-1}(\Sigma).$$

    \smallskip 

    \underline{Decay estimates of $\PP \uu$.} 

    \smallskip 

    Projecting (\ref{General_Fuchsiansystem}) onto $\PP$ direction yields, in view of (\ref{estimatePP}), (\ref{conditionPA}), that 
    \begin{equation} \label{PP_uu}
        \PP \A^0 \PP \pa_t \PP \uu + \PP \A^I \PP \nabla_I \PP \uu + \PP \A^I \PP^\perp \nabla_I \PP^\perp \uu = \frac 1t \AAA \PP \uu + \frac 1t \PP \FF[t,\uu], 
    \end{equation}
    Pairing this with $\PP \uu$ gives 
    \begin{align*}
        \frac 12 \pa_t \angbracket{\PP \uu, \A^0 \PP \uu} &= \frac 12 \angbracket{\PP \uu, \Div \A \PP \uu} + \frac 1t \angbracket{\PP \uu, \AAA \PP \uu} + \angbracket{\PP \uu, \frac 1t \PP \FF - \PP \A^I \PP^\perp \nabla_I \PP^\perp \uu}.
    \end{align*}
    The last term can be bounded with usage of (\ref{estimateFF}) and (\ref{estimateA}): 
    \begin{align*}
        \norm{\angbracket{\PP \uu, \frac 1t \PP \FF - \PP \A^I \PP^\perp \nabla_I \PP^\perp \uu}} &\le |t|^{-1} \left( \lambda_1 + \Norm{\uu}_{H^s} \right) \Norm{\PP \uu}_{L^2}^2; 
    \end{align*}
    we then obtain the $L^2$ decay estimate of $\PP \uu$: 
    \begin{equation} \label{L2estimate_PPuu}
        \pa_t \Norm{\PP \uu}_{L^2}^2 \le |t|^{-1+p} \gamma_1 \theta \Norm{\PP \uu}_{L^2}^2 - |t|^{-1} \tilde 2 \rho_0 \Norm{\PP \uu}_{L^2}^2, 
    \end{equation}
    where, as before, by (\ref{assumptions_kappa}) and choosing $\delta_0$ sufficiently small, one ensures that 
    $$\tilde \rho_0 := \kappa - C \left( \Norm{\uu}_{H^s} \right) > 0.$$ 
    Gronwall's inequality then produces that 
    \begin{equation} 
        \Norm{\PP \uu(t)}_{L^2} \le C_0 e^{|t_0|^p/2} \norm{\frac {t}{t_0}}^{\tilde \rho_0} \delta. 
    \end{equation}

    Differentiating (\ref{PP_uu}) in space again, we have that 
    \begin{equation}
        \begin{aligned}
            \PP \A^0 \PP \pa_t \nabla^k \PP \uu + \PP \A^I \PP \nabla_I \nabla^k \PP \uu &= \frac 1t \AAA \nabla^k \PP \uu + \tilde \GG_k, 
        \end{aligned}
    \end{equation}
    where $\tilde \GG_k$ is defined by 
    \begin{equation*}
        \begin{aligned}
            \tilde \GG_k :&= \frac 1t \nabla^k \PP \FF - \PP \A^I \PP^\perp \nabla_I \nabla^k \PP^\perp \uu \\ 
            &\quad - [\nabla^k, \PP \A^I \PP] \nabla_I \PP \uu - [\nabla^k, \PP \A^I \PP^\perp] \nabla_I \PP^\perp \uu \\ 
            &\quad - \PP \A^I \PP [\nabla^k, \nabla_I] \PP \uu - \PP \A^I \PP^\perp [\nabla^k, \nabla_I] \PP^\perp \uu. 
        \end{aligned}
    \end{equation*}
    Pairing with $\nabla^k \PP \uu$ then gives 
    \begin{equation}
        \frac 12 \pa_t \angbracket{\nabla^k \PP \uu, \PP \A^0 \PP \nabla^k \PP \uu} = \frac 12 \angbracket{\nabla^k \PP \uu, \Div A \nabla^k \PP \uu} + \frac 1t \angbracket{\nabla^k \PP \uu, \AAA \nabla^k \PP \uu} + \angbracket{\nabla^k \PP \uu, \tilde \GG_k}, 
    \end{equation}
    Repeating the commutator estimates again \footnote{Notice that there is a loss of one derivative, due to $\nabla_I \nabla^k \PP^\perp \uu$ terms.}, it can be shown that 
    \begin{align*}
        \pa_t \Norm{\nabla^k \PP \uu}_{L^2} 
        &\le - |t|^{-1} 2 \kappa \Norm{\nabla^k \PP \uu}_{L^2}^2 + \gamma_1 \theta |t|^{-1+p} \Norm{\nabla^k \PP \uu}_{L^2}^2 \\ 
        &\quad + |t|^{-1} C \left( \Norm{\uu}_{H^s} \right) \Norm{\PP \uu}_{H^{s-1}}^2 + |t|^{-1} 2k \gamma_1 \mathtt A \Norm{\PP \uu}_{H^{s-1}}^2. 
    \end{align*}
    Taking summation over $0 \le k \le s-1$ and in view of (\ref{L2estimate_PPuu}), we obtain that 
    \begin{align*}
        \pa_t \Norm{\PP \uu}_{H^{s-1}}^2 
        &\le |t|^{-1+p} \gamma_1 \theta \Norm{\PP \uu}_{H^{s-1}}^2 - |t|^{-1} 2 \tilde \rho_s \Norm{\PP \uu}_{H^{s-1}}^2 + |t|^{-1} C \left( \Norm{\uu}_{H^s} \right) \Norm{\PP \uu}_{L^2}^2, 
    \end{align*}
    with the positive exponent and positive gap 
    $$\tilde \rho_s := \kappa - \frac 12 \gamma_1 s(s-1) \mathtt A - C \left( \Norm{\uu}_{H^s} \right) > 0, \quad \tilde \rho_0 - \tilde \rho_s > 0,$$ 
    by choosing $\delta_0$ sufficiently small. 
    By Gronwall's inequality again, together with (\ref{estimate_uu}) and (\ref{L2estimate_PPuu}), we derive the $H^{s-1}$ decay of $\PP \uu$: 
    \begin{equation} \label{HSestimate_PPuu}
        \Norm{\PP \uu}_{H^{s-1}} \le C(\delta) e^{C_0|t_0|^p} \norm{\frac {t}{t_0}}^{\tilde \rho_s}. 
    \end{equation}

    \smallskip 

    \underline{Decay estimate of $\left( \left. \PP^\perp \uu - \PP^\perp \uu \right|_{t=0} \right)$.} 

    \smallskip 

    Finally, recalling (\ref{PPperp_uu}) and letting $t_1 = t$ and $t_2 \rightarrow 0^-$ give that  
    \begin{align*}
        \Norm{\PP^\perp \uu(t) - \PP^\perp \uu(0)}_{H^{s-1}} 
        &\le \gamma_1 \int_{t}^{0} |t|^{-1} \left( 2 + \Norm{\uu(t)}_{H^{s-1}} \right) \Norm{\PP \uu(\tau)}_{H^s} \Norm{\PP \uu(\tau)}_{H^{s-1}} \dd \tau \\ 
        &\le C_0 \left( 1 + C(\delta) \right) \int_t^0 - \frac 1 \tau \Norm{\PP \uu(\tau)}_{H^s} \Norm{\PP \uu(\tau)}_{H^{s-1}} \dd \tau. 
    \end{align*}
    Inserting estimates in (\ref{HSestimate_PPuu}) and by Holder's inequality, we arrive at 
    \begin{equation}
        \Norm{\PP^\perp \uu(t) - \PP^\perp \uu(0)}_{H^{s-1}} \le C(\delta) e^{C_0|t_0|^p} \norm{\frac {t}{t_0}}^{\tilde \rho_s}. 
    \end{equation}
Taking $\zeta := \kappa - \gamma_1 s(s-1) \mathtt A/2$ to be the principal part of $\tilde \rho_s$ and since $\delta$ can be chosen sufficiently small with bound $\sigma$, we retrieve the converging estimate for $\PP^\perp\uu$. We complete the proof. 
\end{proofofthm}

\begin{remark}
    By the time change $-t'= (-t)^q$, we also obtain an analogous version of the theorem with an explicit parameter $0 < q \le 1$. 
\end{remark}

\end{appendix}

\printbibliography

\nocite{*}

\end{document}